\newtheorem{theorem}{Theorem}
\newtheorem{lemma}[theorem]{Lemma}
\newtheorem{proposition}[theorem]{Proposition}
\newtheorem{corollary}[theorem]{Corollary}
\theoremstyle{definition}
\newtheorem{definition}{Definition}
\newtheorem{remark}{Remark}
\DeclareMathOperator{\downarr}{\downarrow}
\DeclareMathOperator{\uparr}{\uparrow}
\begin{document}
\title{Canonical extension and canonicity via DCPO presentations}

\author{Mai Gehrke}
\ead{M.Gehrke@math.ru.nl}
\address{Institute for Mathematics, Astrophysics and Particle Physics,
Radboud Universiteit Nijmegen\\ Postbus 9010, 6500 GL Nijmegen, The Netherlands}

\author{Jacob Vosmaer\corref{cor}\fnref{NWO-Vici}}
\ead{J.Vosmaer@uva.nl}
\address{Institute for Logic, Language and Computation, Universiteit van Amsterdam\\ Postbus 94242, 1090 GE Amsterdam, The Netherlands}

\cortext[cor]{Corresponding author.}
\fntext[NWO-Vici]{The research of this author has been made possible by VICI grant
    639.073.501 of the Netherlands Organization for Scientific Research
    (NWO).}

\date{\today}

\begin{keyword}
dcpo presentation, dcpo algebra, lattice theory, canonical extension, canonicity

\end{keyword}

\begin{abstract}
  The canonical extension of a lattice is in an essential way a
  two-sided completion. Domain theory, on the contrary, is primarily
  concerned with one-sided completeness. In this paper, we show two
  things. Firstly, that the canonical extension of a lattice can be
  given an asymmetric description in two stages: a free co-directed
  meet completion, followed by a completion by \emph{selected}
  directed joins. Secondly, we show that the general techniques for
  dcpo presentations of dcpo algebras used in the second stage of the
  construction immediately give us the well-known canonicity result
  for bounded lattices with operators.
\end{abstract}

\maketitle

\section{Introduction}
Domain theory on the one side and canonical extensions and canonicity
on the other side are topics that have played a fundamental role in
non-classical logic and its computer science applications for a long
time. Domain theory has been intrinsically tied to foundational issues
in computer science since it was introduced by Dana Scott in the late
1960s in order to provide semantics for the lambda calculus
\cite{Scott1970}. The solution of domain equations and the modern
techniques for dcpo presentations are particularly important tools
\cite{AJ1994,JMV2008}.  Canonical extensions in their algebraic form
were first introduced by J\'onsson and Tarski in 1951 with the hopes
of giving a representation theorem for relation algebras
\cite{JT1951}. However, they were later realised to be closely related
to the very important canonical model construction in logic and thus
to issues concerning relational semantics for a plethora of logics
important in computer science applications such as modal logics
\cite{Goldblatt1989}.  The algebraic approach to canonical extensions
and questions of canonicity have been revitalised over the last few
decades after the theory was extended beyond the setting of
Boolean-based logics and additional operations that preserve joins in
each coordinate. The initial step in this development was the
realisation that Scott continuity plays a central role in the theory
\cite{GJ1994}.  Apart from this one fundamental connection, the two
topics have not had much to do with each other and any more tangible
connections have remained hidden. This is somewhat remarkable in light
of the central role Stone duality plays in both domain theory \cite{Abr1991}
\emph{and} canonical extension. We will briefly touch upon the
interaction between Stone duality, domain theory and canonical
extenstions in
Section \ref{intro:canext-and-logic} below.

On a more directly mathematical level, there are also other reasons to
seek
to understand the connections between domain theory and the theory of
canonical extensions. Completing, or directedly completing, posets may
be done freely if we only consider one-sided limits in the form either
of joins or meets and this is fundamental to the theory of domains and
the related theory of frames as studied in pointfree
topology. However, unrestricted two-sided free completions do not
exist. Canonical extensions may be viewed as the second level (after
MacNeille completion) of two-sided completions obtained by restricting
the alternations of joins and meets required to generate the
completion \cite{GP2008}.  As such, they are certainly dcpos, and in
the distributive setting, algebraic domains and they remain so when
turned upside down.  This begs the question of understanding these
two-sided completions relative to the one-sided completion techniques
that are so central in domain theory.  In this spirit, this paper is
an answer to a question raised by Achim Jung during his talk at
TACL2009 of the relation between his results with Moshier and Vickers
in \cite{JMV2008} and canonical extensions. To be specific, we show
that the canonical extension of a lattice can be given an asymmetric
description in two stages: a free co-directed meet completion followed
by a completion by selected directed joins as made possible by the
methods of dcpo presentations.  In addition, we show that the pivotal
1994 canonicity result \cite{GJ1994} that introduced Scott continuity
into the theory of canonical extensions may in fact be seen as a
special case of the theorem on representations of dcpo algebras given
in \cite{JMV2008} thus making the connection between the two fields
quite explicit. 
In obtaining the 1994 canonicity result from the one-sided theory, the
setting of dcpo algebras rather than just suplattice algebras is crucial
as the former is needed in order to have a result on the lifting of 
operations available (see Remark~\ref{rem:suplatvsdcpo} in 
Section~\ref{section:results} below).

The organization of this paper is as follows: first, we provide brief
discussions about the background of canonical extension, both in
relation to Stone duality and in relation to logic. After that, in
Section \ref{section:preliminaries}, we provide preliminaries on dcpo
presentations, dcpo algebras, free directed completions and canonical
extensions. The main results are presented in Section
\ref{section:results}; after which we conclude the article with a discussion in
Section \ref{section:discussion}.

The authors would like to thank the referees for their thorough reading
of our manuscript and their thoughtful comments which we are sure have 
made our paper easier to read.

\subsection{Canonical extension and Stone duality}
At its base, canonical extension is an algebraic way of talking 
about Stone's duality for bounded distributive lattices. To see
this consider the following square of functors for which both the inner 
and the outer square commute
\begin{displaymath}
\xy 
(-15,8)*+{\mbox{ DL}}="t0"; (-15,-8)*+{\mbox{ DL}^+}="b0"; 
(15,8)*+{\mbox{ Stone}}="t1"; (15,-8)*+{\mbox{Pos}}="b1"; 
{\ar@<.5ex>^{S} "t0"; "t1"}; 
{\ar@<.5ex>^{{CO}} "t1"; "t0"}; 
{\ar@<.5ex>^{ J^\infty} "b0"; "b1"}; 
{\ar@<.5ex>^{\mathcal U} "b1"; "b0"}; 
{\ar@<-.5ex>_{\sigma} "t0"; "b0"}; 
{\ar@<-.5ex>_{} "b0"; "t0"}; 
{\ar@<.5ex>^{} "t1"; "b1"}; 
{\ar@<.5ex>^{\beta} "b1"; "t1"}; 
\endxy
\end{displaymath} 
Here the upper pair of functors gives the Stone duality for bounded
distributive lattices and spectral spaces, and the lower pair of functors 
gives the
`discrete' duality between completely distributive algebraic lattices
(or complete lattices join-generated by their completely join-prime
elements) and partially ordered sets. This second duality generalises
the very well-known duality between complete and atomic Boolean
algebras and Sets. On objects, it sends a completely distributive
algebraic lattice (DL$^+$) to its poset of completely join-irreducible
elements and a poset to its lattice of upsets.

In the vertical direction, we have natural forgetful functors: DL$^+$s
are in particular DLs, and topological spaces give rise to posets 
via the specialisation order: $x\leq y$ if and only if every open 
containing $x$ also contains $y$. These forgetful functors go in
opposite directions so they are obviously not translations of 
each other across the dualities. Instead, they translate to left
adjoints of each other across the dualities. This brings us to the 
canonical extension. The forgetful functor ${\rm DL}^+\to{\rm DL}$ 
that embeds DL$^+$ as a non-full subcategory of DL has a left 
adjoint $\sigma:~{\rm DL}\to{\rm DL}^+$ and its dual incarnation 
is the forgetful functor from Stone spaces to posets. Moreover,
this left adjoint $\sigma:~{\rm DL}\to{\rm DL}^+$ is a reflector. Thus 
we have, for each DL, an embedding $A\hookrightarrow A^\sigma$; 
this embedding is the \emph{canonical extension}. The dual 
incarnation of the inclusion from ${\rm DL}^+$ to ${\rm DL}$ is 
the left adjoint of the forgetful functor from the category of Stone 
(=spectral) spaces to the category of posets. In the distributive
lattice setting this left adjoint was first identified by Banaschewski in 
\cite{Ba56} and in the Boolean setting it is the very well-known 
Stone-{\v C}ech compactification.

We reiterate that both of the inclusions, ${\rm DL}^+\to{\rm DL}$
and the one of spectral spaces in posets, are inclusions as 
{\it non-full subcategories}: a DL$^+$ morphism is not just a bounded
lattice homomorphism but a complete lattice homomorphism; 
similarly there are maps between spectral spaces which preserve 
the specialisation order without being continuous. As a consequence,
even for objects in the subcategories on either side of the square, the 
reflectors need not be the identity. To whit, for an infinite powerset 
Boolean algebra, $B$, the canonical extension will be the powerset of 
the set of all ultrafilters of $B$ -- a significantly larger Boolean algebra. 
Dually, this corresponds to the fact that for an infinite Boolean space, the 
Stone-{\v C}ech compactification of the underlying set, viewed as a 
discrete space, will be much larger than the original space.

Returning to our square of functors, note that 
the commutativity of the square means that we can understand
$A^\sigma$ in terms of the dual space $S(A)=(X,\tau)$. That is,
$A^\sigma={\mathcal U}(X,\leq)$ is the lattice of upsets of the dual
space of $A$ equipped with the specialisation order of the Stone
topology $\tau$. The embedding of $A$ in its canonical extension in
this description is given by the Stone embedding map $a\mapsto
\hat{a}$ which maps each element of the lattice to the corresponding
compact open upset. So canonical extension can be obtained via duality
and for this reason it is often referred to as the `double dual' in
the logic literature.

Most interestingly, the converse is also true: It is possible to
reconstruct the dual space of $A$ from the canonical extension
$A\hookrightarrow A^\sigma$ and this is why we can claim that the
theory of canonical extensions may be seen as an algebraic formulation
of Stone/Priestley duality. Given the canonical extension
$A\hookrightarrow A^\sigma$ of a DL, we obtain the dual space of $A$
by applying the discrete duality to obtain the set
$X=J^\infty(A^\sigma)$. The topology is then generated by the
`shadows' of the elements of $A$ on $X$, that is, by the sets
$\hat{a}=\{x\in X\mid x\leq a\}$ where $a$ ranges over $A$.

We point out two advantages of the canonical extension approach 
to duality. Firstly, canonical extension is particularly well-suited for
studying additional operations on lattices or Boolean algebras. This
was the original purpose for canonical extensions and their scope 
has been expanded in a modular fashion \cite{GJ2004, GH2001, DGP05} 
in order to provide representation theorems for lattice- and even 
poset-based algebras. The two-sided aspect is particularly important
when additional operations that are order-reversing are present.
Secondly, although the classical existence proof \cite{JT1951} 
for the canonical extension uses the Prime Filter Theorem, it is 
now known \cite{GH2001} that one can develop the theory of canonical 
extensions without invoking the Axiom of Choice.

\subsection{Canonical extension and logic}\label{intro:canext-and-logic}

In logic and computer science, Stone duality is central in many ways.
A landmark paper in setting this out in the clearest of terms is Abramsky's 
paper \cite{Abr1991} where he shows how Stone duality for 
distributive lattices allows us to connect specification languages with
denotational semantics. 
The role of Stone duality is similar in modal
logic in the sense that it connects specification and state-based models,
but the two approaches differ in the way they manage to factor out the 
topology inherent in Stone duality. In domain theory, one restricts to 
very special lattices and spaces for which the topology is determined 
by the specialisation order. In modal logic, one focuses on logics for which the 
topolgy `factors out' in the sense that forgetting it does not change the 
logic.

Canonical extensions are particularly
pertinent for several reasons. One is that we usually have additional
operations, like modalities, negations, or implications and the
translation of such structure as well as their equational properties
to the dual side is more easily understood by going via canonical
extension and correspondence across the discrete duality~\cite{GNV2005}.
A second and very important reason that canonical extensions play a
central role in the study of various logics is that they are centrally
related to relational semantics for these logics.

We illustrate this with the example of classical propositional modal logic, 
and we will give a very brief impression of the role
that canonical extensions play in the model theory of modal logic as it is
described in \cite{BdRV2001}.  We will consider the following two
natural semantics for modal logic:
\begin{itemize}
\item Kripke frames, which are set-based transition systems or
  coalgebras for the covariant powerset functor to be more precise,
\item modal algebras: Boolean algebras with an additional unary
(finite join preserving) operation, meant to interpret the modal diamond operator.
\end{itemize}
The former provide the natural semantics for modal logic and are 
central in various state-based models in computer science. The 
latter provide a specification language for these systems and 
often correspond to the syntactic description of the pertinent 
logics.

Thus, for classical modal logics, the restriction of the above square 
to Boolean algebras is the appropriate one, and then the additional 
structure is superposed: a modal operator on the Boolean algebras 
translates to a binary relation with certain topological properties on 
the dual spaces - this is what is known as descriptive general frames.
Forgetting the topology yields Kripke frames, which are in a discrete
duality with complex modal algebras. Note that while the inner and
outer square still commute the vertical functors are only reflectors
for the underlying Boolean algebras: this is extended Stone duality
and not natural duality for modal algebras. 
\[ \xymatrix@C=10ex{ 
\txt{syntactic\\specification} \ar@{<~>}[d] &\\
\txt{modal\\algebras} \ar[d]_{\sigma}
  \ar@<1ex>[r]^{S}&
  \txt{descriptive\\general frames} \ar@<1ex>[l]^{CO} \ar[d]\\
  \txt{complex\\modal algebras} \ar@<-1ex>[u] \ar@<1ex>[r]^{At}&
  \txt{Kripke\\frames} \ar@<1ex>[u]^{\beta}\ar@<1ex>[l]^{\mathcal{P}} \\
&\txt{relational\\semantics} \ar@{<~>}[u]
 }\]

 The central importance of canonical extension in this setting comes from 
 the fact, mentioned above, that the two important spots in the above diagram
 are the upper left and the lower right: the upper left corresponds to the 
 syntactic specification of the logic; the lower right to the semantic 
 specification. Thus moving horizontally is not enough; we must also 
 move up and down. 
In addition, we claim that the route down-and-over may be viewed as 
separating the issues involved better than the route over-and-down. To 
this end, one can think of the upper left-hand corner as the finitary 
description of the base of a topological space, and of the lower right-hand 
corner as the points underlying the space. Taking the canonical extension, 
i.e.~going down from the upper left hand corner, corresponds to augmenting 
the finitary description of the base with infinitary (but point-free) information; 
subsequently going over adds points to the picture. If we go over and 
down, already the first step (of going over) simultaneously moves us to 
a topological and point-based perspective, while going down just forgets
part of what we have worked hard to identify in the topological duality. 
Note that this separation of topological and contravariant content of the 
topological duality is even useful if our final goal is full-fledged topological 
duality (i.e., the upper right-hand corner) and not just the lower right-hand 
corner where the topology has been removed since, as we outlined in the 
previous subsection, the canonical extension, $A\hookrightarrow A^\sigma$ 
(but not $A^\sigma$ alone) contains all the topological information of the 
topological duality in a point-free and co-variant way.

 Finally, consider 
 the question of logical completeness. Given the way Kripke 
 semantics is defined, a formula $\phi$ is valid in a structure if and 
 only if the identity $\phi\approx 1$ holds in the corresponding 
 complex algebra. This is essentially the definition. On the other 
 hand, a syntactic specification of a modal logic is typically an 
 equational theory, $\Sigma$, of modal algebras. Thus soundness
 with respect to a class $\mathcal K$ of structures means that the 
 complex algebras of the structures in $\mathcal K$ all are models of 
 $\Sigma$. Completeness, in the contrapositive, means that an equation 
 that is not a consequence of $\Sigma$ is violated in the complex algebra of 
 some $K\in \mathcal K$. Canonicity of $\Sigma$ means that the class of 
 models of $\Sigma$ is closed under canonical extension.
Any equation that isn't a consequence of a theory $\Sigma$ is violated by 
 some abstract algebra model of $\Sigma$ and thus also by its canonical 
 extension.  If $\Sigma$ is canonical then this canonical extension is a 
 model of the theory in which the given equation is violated. In this way
 canonicity implies that the logic possesses complete Kripke semantics.
 One should note that not all modal logics are canonical but most of the 
 standard ones are. However, even in the absence of canonicity, it is clear 
 that canonical extensions are pertinent since they provide an account of
 the connection between the upper left and lower right corner of the
 diagram.  

\section{Preliminaries} \label{section:preliminaries}

We collect here the main facts on dcpo completions, free co-directed
completions, and canonical extensions that we will need and give
specific references to where one can find proofs.

\subsection{DCPO and suplattice presentations}

The following facts about dcpo presentations,
suplattice presentations, and dcpo algebras may be
found in \cite{JMV2008}.

\begin{definition}\label{dcpo:Definition-Presentation}
  A \emph{dcpo presentation} is a triple $\langle P; \sqsubseteq,
  C\rangle$ where
  \begin{itemize}
  \item $\langle P, \sqsubseteq\rangle$ is a preorder,
  \item $C\subseteq P \times \mathcal{P}(P)$ is a family of
    \emph{covers}, where $U$ is directed for every $(x,U)\in C$. We
    write $x \triangleleft U$ if $(x, U)\in C$. 
  \end{itemize} 
  Let $\langle D,\leq\rangle$ be a dcpo and let $f\colon P \to D$ be an
order-preserving map. We say $f$ \emph{preserves covers} if for all
  $x \triangleleft U$ it is true that $f(x) \leq \bigvee_{y\in U} f(y)$. Note that, 
  from here on, we will refer to maps preserving either an order or a preorder 
  as order-preserving in order to lighten the notation.
\end{definition}
A \emph{suplattice} is a complete join-semilattice; the appropriate
homomorphisms between suplattices are those maps which preserve all
joins. If we replace `dcpo' by `suplattice' in Definition
\ref{dcpo:Definition-Presentation} and if we drop the assumption that
each $U$ above is directed, we obtain the definition for a
\emph{suplattice presentation}.  Observe that every dcpo presentation
is also a suplattice presentation.
\begin{definition}
  A dcpo $\overline P$ is \emph{freely generated} by the dcpo
  presentation $\langle P; \sqsubseteq, C\rangle$ if there is a map
  $\eta \colon P \to \overline P$ that preserves covers, and for every
  dcpo $\langle D,\leq\rangle$ and cover-preserving map $f\colon P \to D$ there
  is a unique Scott-continuous map $\overline f \colon \overline P \to
  D$ such that $\overline f \circ \eta = f$.
\end{definition}
Again, if we replace `dcpo' with `suplattice' and `Scott-continuous
map' by `suplattice homomorphism' above, we obtain the definition of a
suplattice freely generated by a suplattice presentation. 
We will now describe how freely generated dcpos and suplattices are 
obtained in \cite{JMV2008}.
\begin{definition}
  A \emph{$C$-ideal of $P$} is a set $X \subseteq P$ which
  is downward closed and closed under covers, i.e.~for all $x
  \triangleleft U$, if $U\subseteq X$ then $x\in X$.  We denote the
  set of all $C$-ideals of $P$ by $C\operatorname{-Idl}(P)$.
\end{definition}
An arbitrary intersection of $C$-ideals is again a $C$-ideal; thus the
collection of all $C$-ideals of $\langle P; \sqsubseteq, C\rangle$
forms a complete lattice $C\operatorname{-Idl}(P)$  and we can denote
by $\langle X\rangle$ the smallest $C$-ideal containing $X$ for any
$X\subseteq P$; we will
abbreviate $\langle \{ x\} \rangle$ as $\langle x\rangle$.  Observe
that $\downarr X \subseteq \langle X \rangle$. We will denote
meets and joins in $C\operatorname{-Idl}(P)$ by $\bigwedge$ and
$\bigvee$, respectively. Note that for all $S\subseteq
C\operatorname{-Idl}(P)$, $\bigwedge S = \bigcap S$ and $\bigvee S =
\langle \bigcup S \rangle$.

\begin{proposition}[\cite{JMV2008}, Proposition 2.5] \label{suplattice:Presentation}
  Let $\langle P; \sqsubseteq, C\rangle$ be a suplattice
  presentation. Then $\langle C\operatorname{-Idl}(P),
  \subseteq\rangle$ is the suplattice freely generated by $\langle P;
  \sqsubseteq, C\rangle$, where $\eta \colon P \to \overline P$ is
  defined by $\eta \colon x \mapsto \langle x\rangle$.
\end{proposition}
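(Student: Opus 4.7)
The plan is to verify the universal property directly, using the description of $C\operatorname{-Idl}(P)$ already given in the excerpt. First I would check that $\eta\colon x\mapsto \langle x\rangle$ is itself order-preserving and cover-preserving: if $x\sqsubseteq y$ then $x\in \downarrow y\subseteq \langle y\rangle$, so $\langle x\rangle\subseteq\langle y\rangle$; and if $x\triangleleft U$ then $U\subseteq \langle U\rangle = \bigvee_{u\in U}\langle u\rangle$, which forces $x\in\langle U\rangle$ and hence $\langle x\rangle\subseteq\bigvee_{u\in U}\eta(u)$.

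Next, given a suplattice $L$ and a cover-preserving, order-preserving map $f\colon P\to L$, I would define $\overline f\colon C\operatorname{-Idl}(P)\to L$ by
\[
\overline f(X) \;=\; \bigvee_{x\in X} f(x).
\]
The one nontrivial ingredient, and the crux of the proof, is the following auxiliary claim: for every $\ell\in L$ the set
\[
Y_\ell \;:=\; \{\, y\in P \mid f(y)\leq \ell\,\}
\]
is a $C$-ideal of $P$. Downward closure is immediate from order-preservation of $f$, and closure under covers is precisely the content of $f$ being cover-preserving: if $y\triangleleft U$ with $U\subseteq Y_\ell$, then $f(y)\leq \bigvee_{u\in U} f(u)\leq \ell$.

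With this claim in hand the remaining verifications are routine. To see $\overline f\circ \eta=f$, note that $x\in\langle x\rangle$ gives $\overline f(\langle x\rangle)\geq f(x)$, while applying the claim to $\ell=f(x)$ shows $\langle x\rangle\subseteq Y_{f(x)}$, so $f(y)\leq f(x)$ for every $y\in\langle x\rangle$. To see that $\overline f$ preserves arbitrary joins, set $\ell=\bigvee_i \overline f(X_i)$; since $\bigcup_i X_i\subseteq Y_\ell$ and $Y_\ell$ is a $C$-ideal, $\langle \bigcup_i X_i\rangle\subseteq Y_\ell$, which yields $\overline f\bigl(\bigvee_i X_i\bigr)\leq \ell$; the reverse inequality is immediate from monotonicity.

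Uniqueness reduces to the observation that every $C$-ideal satisfies $X=\bigvee_{x\in X}\langle x\rangle$, so that $\eta(P)$ join-generates $C\operatorname{-Idl}(P)$; any two suplattice homomorphisms that agree on $\eta(P)$ must therefore agree everywhere. I do not expect any serious obstacle: the entire argument hinges on the auxiliary claim about $Y_\ell$, and that claim is essentially a restatement of the two hypotheses on $f$.
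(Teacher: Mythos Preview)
Your argument is correct and is essentially the standard proof of this fact. Note, however, that the paper does not supply its own proof of this proposition: it is quoted as \cite[Proposition~2.5]{JMV2008} and used as a black box, so there is no in-paper argument to compare against. What you have written is precisely the proof one finds in the cited source: the key device is the observation that each preimage set $Y_\ell=\{y\in P\mid f(y)\leq\ell\}$ is a $C$-ideal, which simultaneously yields $\overline f\circ\eta=f$ and join-preservation of $\overline f$, with uniqueness following from join-generation of $C\operatorname{-Idl}(P)$ by the principal $C$-ideals.
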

\begin{definition}\label{dcpo:Overline-P}
  Given a dcpo presentation $\langle P; \sqsubseteq, C\rangle$, we
  define
  \begin{align*} \overline P = \bigcap \{ X\subseteq
    C\operatorname{-Idl}(P) \mid & X \text{ is closed under directed
      joins and }\\& \langle x \rangle \in X \text{ for all } x\in
    P\}.
  \end{align*}
\end{definition}
\begin{proposition}[\cite{JMV2008}, Theorem 2.7]
  Let $\langle P; \sqsubseteq, C\rangle$ be a dcpo presentation. Then
  $\langle \overline P, \subseteq \rangle$ is the dcpo freely
  generated by $\langle P; \sqsubseteq, C\rangle$, where $\eta \colon
  P \to \overline P$ is defined by $\eta \colon x \mapsto \langle
  x\rangle$.
\end{proposition}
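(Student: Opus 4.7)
The plan is to verify three ingredients: $\overline P$ carries a dcpo structure, $\eta$ is cover-preserving, and the universal property holds. The first two are routine given the shape of $\overline P$; the main obstacle is the universal property, which I would handle via a single lemma about ``$C$-ideal test sets''.

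For the dcpo structure: by its definition as an intersection of subsets of $C\operatorname{-Idl}(P)$ each closed under directed joins, $\overline P$ is itself closed under directed joins taken in $C\operatorname{-Idl}(P)$, so the inherited order makes it a dcpo. For cover-preservation of $\eta$: if $x\triangleleft U$ with $U$ directed, then $\{\langle u\rangle\mid u\in U\}$ is directed in $\overline P$ with join $\langle U\rangle$ (computed in $C\operatorname{-Idl}(P)$, and agreeing with the join in $\overline P$ on directed families), and since $\langle U\rangle$ is closed under the cover $x\triangleleft U\subseteq\langle U\rangle$ we get $\langle x\rangle\subseteq\langle U\rangle$.

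For the universal property, given a cover-preserving $f\colon P\to D$ into a dcpo $D$, I would define $\overline f\colon \overline P\to D$ by $\overline f(X):=\bigvee\{f(x)\mid x\in X\}$ whenever this join exists in $D$; note that it is then automatically monotone in $X$, straight from the definition. The crucial observation is the following lemma: for any $c\in D$, the set $F_c:=\{x\in P\mid f(x)\leq c\}$ is a $C$-ideal of $P$. Downward closure is immediate from order-preservation of $f$; for closure under covers, if $y\triangleleft U\subseteq F_c$ then $\{f(u)\mid u\in U\}$ is directed in $D$ (as $U$ is directed and $f$ monotone) and bounded by $c$, so cover-preservation of $f$ yields $f(y)\leq\bigvee_{u\in U}f(u)\leq c$.

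Two applications of this lemma then finish the job. First, $\langle y\rangle\subseteq F_{f(y)}$, so $\overline f(\langle y\rangle)=f(y)$ and $\overline f\circ\eta=f$. Second, given a directed family $\{X_\alpha\}$ on which $\overline f$ is defined, let $c:=\bigvee_\alpha\overline f(X_\alpha)$ (existing in $D$ by monotonicity); then each $X_\alpha\subseteq F_c$, so $\bigcup_\alpha X_\alpha\subseteq F_c$, and because $F_c$ is a $C$-ideal we conclude $\bigvee_\alpha X_\alpha=\langle\bigcup_\alpha X_\alpha\rangle\subseteq F_c$. This shows simultaneously that the domain of $\overline f$ is closed under directed joins in $C\operatorname{-Idl}(P)$ (hence contains $\overline P$, since it already contains the principal $C$-ideals) and that $\overline f$ is Scott-continuous. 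Uniqueness follows because any Scott-continuous extension agrees with $\overline f$ on the principal $C$-ideals, and the set of agreement is closed under directed joins, so by minimality of $\overline P$ it is all of $\overline P$.
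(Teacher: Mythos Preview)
The paper does not prove this proposition at all: it is quoted verbatim as \cite[Theorem 2.7]{JMV2008} in the preliminaries, with no argument supplied. So there is no ``paper's own proof'' to compare against; the relevant comparison would be with the original argument in \cite{JMV2008}, which indeed proceeds along the lines you sketch (the key device being that for each $c\in D$ the set $F_c=\{x\in P\mid f(x)\leq c\}$ is a $C$-ideal).

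Your argument is correct, but one step is stated a bit elliptically and deserves to be made explicit. When you pass from $\bigvee_\alpha X_\alpha\subseteq F_c$ to ``the domain of $\overline f$ is closed under directed joins'', you are implicitly asserting that $c=\bigvee_\alpha\overline f(X_\alpha)$ is not merely \emph{an} upper bound of $\{f(x)\mid x\in\bigvee_\alpha X_\alpha\}$ but the \emph{least} one. In a dcpo, boundedness alone does not guarantee existence of a supremum, so this needs a line: if $d$ is any upper bound of that set, then in particular $d$ bounds $\{f(x)\mid x\in X_\alpha\}$ for each $\alpha$ (since $X_\alpha\subseteq\bigvee_\alpha X_\alpha$), whence $\overline f(X_\alpha)\leq d$ for all $\alpha$ and thus $c\leq d$. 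With this sentence added, the join exists and equals $c$, giving both well-definedness on $\overline P$ and Scott continuity in one stroke, exactly as you intend. The rest of the argument (cover-preservation of $\eta$, the $F_c$ lemma, and uniqueness via minimality of $\overline P$) is clean.
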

Observe that it is `hard' to tell which $C$-ideals belong to
$\overline P$; see the comments at the end of Section 2 of \cite{JMV2008}.

\subsection{DCPO algebras}
We now turn to algebras. A \emph{pre-ordered
  algebra} for a set of operation symbols $\Omega$ with
arities $\alpha \colon \Omega \to \mathbb{N}$ consists of a pre-order $\langle P,
\sqsubseteq\rangle$ and order-preserving 
maps $\omega_P\colon P^{\alpha(\omega)} \to P$ for $\omega\in\Omega$. 
For dcpo presentations $\langle P_1; \sqsubseteq, C_1\rangle, \dotsc, \langle P_n;
\sqsubseteq, C_n\rangle, \langle P';\sqsubseteq, C'\rangle$ we write $x_i\triangleleft_i U_i$ if $(x_i,U_i) \in C_i$. An order-preserving map $f\colon P_1\times \dotsb
\times P_n \to P'$ is called \emph{cover-stable} if for all $1\leq i \leq n$,
all $(x_1, \dotsc, x_n)\in P_1 \times \dotsb \times P_n$ and all $U_i
\subseteq P_i$ such that $x_i
\triangleleft U_i$, we have \[ f(x_1,\dotsc, x_n) \triangleleft' \{
f(x_1,\dotsc, x_{i-1}, y, x_{i+1}, x_n) \mid y\in U_i\}.\]
\begin{proposition}[\cite{JMV2008}, Theorem 3.6]\label{dcpo:coverstable}
  If $f\colon P_1\times \dotsb \times P_n \to P'$ is cover-stable and
  order-preserving, then the function $\overline{f}\colon
  \overline{P_1}\times \dotsb \times \overline{P_n} \to
  \overline{P'}$, defined by
  \[\overline{f} 
 \colon (X_1,\dotsc, X_n) \mapsto \langle \{ f(x_1,\dotsc,x_n) \mid (x_1,
  \dotsc, x_n) \in X_1 \times \dotsb \times X_n \} \rangle, \] is
 a well-defined and Scott-continuous extension of $f$ (and is
  unique as such). 
\end{proposition}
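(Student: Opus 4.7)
The plan is to establish the four claims — well-definedness as a map into $\overline{P'}$, extension of $f$ via the generators, joint Scott-continuity, and uniqueness — all by exploiting a single \emph{slice} observation: for fixed values in all but one coordinate, the induced one-variable map $P_i\to P'$ is order-preserving and, thanks to cover-stability, cover-preserving, so it extends uniquely by the universal property of $\overline{P_i}$.

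First I would pin down $\overline f$ on tuples of generators. For any $(x_1,\dotsc,x_n)$ and any coordinate $i$, the set
\[ E_i = \{y \in P_i \mid f(x_1,\dotsc,x_{i-1},y,x_{i+1},\dotsc,x_n) \in \langle f(x_1,\dotsc,x_n)\rangle\} \]
is downward closed (order-preservation) and closed under covers of $P_i$ (cover-stability), so it is a $C_i$-ideal containing $x_i$, whence $\langle x_i\rangle \subseteq E_i$. Iterating once per coordinate yields $\overline f(\langle x_1\rangle,\dotsc,\langle x_n\rangle) = \langle f(x_1,\dotsc,x_n)\rangle$, which lies in $\overline{P'}$ and exhibits $\overline f$ as an extension of $f$ through $\eta$. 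For joint Scott-continuity I would, given a directed family $(Y^k)_k$ in $\overline{P_j}$, rewrite both $\overline f(X_1,\dotsc,\bigvee_k Y^k,\dotsc,X_n)$ and $\bigvee_k \overline f(X_1,\dotsc,Y^k,\dotsc,X_n)$ as the $C'$-ideal generated by the same union of $f$-images in $P'$; the only subtle point is that passing from $\bigcup_k Y^k$ to its hull $\langle\bigcup_k Y^k\rangle$ leaves the generated $C'$-ideal unchanged, which is again the $E_i$-style slice argument.

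To promote $\overline f$ from a map into $C'\operatorname{-Idl}(P')$ to one into $\overline{P'}$, I would induct on the number of coordinates already allowed to range over $\overline{P_j}$, the remaining ones being held at generators $\langle x_k\rangle$. At each inductive step, fix all other arguments and consider the collection of those $X \in C_i\operatorname{-Idl}(P_i)$ for which $\overline f$ evaluated at this tuple lies in $\overline{P'}$; it contains every $\langle x\rangle$ (by the inductive hypothesis combined with the generator step) and is closed under directed joins (by Scott-continuity in that coordinate and closure of $\overline{P'}$ under directed joins), so by the intersection in Definition~\ref{dcpo:Overline-P} it contains all of $\overline{P_i}$. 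Uniqueness follows from the freeness of each $\overline{P_i}$: any Scott-continuous extension $g$ agrees with $\overline f$ on tuples of generators, and applying the uniqueness clause of the universal property one coordinate at a time forces $g = \overline f$ everywhere. The main obstacle is the interlock between Scott-continuity and well-definedness into $\overline{P'}$; verifying continuity first within the ambient suplattice $C'\operatorname{-Idl}(P')$ and only then restricting to $\overline{P'}$ decouples the two issues and keeps the induction on Definition~\ref{dcpo:Overline-P} clean.
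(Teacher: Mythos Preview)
The paper does not prove this proposition; it is quoted without proof as a preliminary result from \cite{JMV2008}, Theorem~3.6. There is therefore nothing in the present paper to compare your argument against.

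That said, your proposal is sound and follows the natural line one would expect (and, in outline, the one in \cite{JMV2008}). The slice observation --- that fixing all but one coordinate yields a cover-preserving map, so that the preimage of any $C'$-ideal is a $C_i$-ideal --- is exactly the right engine, and you deploy it correctly both for the generator computation $\overline f(\langle x_1\rangle,\dotsc,\langle x_n\rangle)=\langle f(x_1,\dotsc,x_n)\rangle$ and for coordinatewise Scott-continuity. Your decision to establish Scott-continuity first in the ambient $C'\operatorname{-Idl}(P')$ and only afterwards prove that the image lands in $\overline{P'}$, via the closure description of Definition~\ref{dcpo:Overline-P}, is the clean way to untangle the mutual dependence. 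One small point worth making explicit in the uniqueness step: the reason two Scott-continuous maps out of $\overline{P_i}$ agreeing on all $\langle x\rangle$ must coincide is that $\overline{P_i}$ is, by Definition~\ref{dcpo:Overline-P}, the least sub-dcpo of $C_i\operatorname{-Idl}(P_i)$ containing the generators --- so a transfinite induction on directed joins (or equivalently the uniqueness clause of the universal property) does the job. Your sketch gestures at this but could say it in one line.
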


\begin{proposition}[\cite{JMV2008}, Proposition 4.2]\label{dcpo:Canonicity}
  Consider a structure $\langle P; \sqsubseteq, C,
  (\omega_P)_{\omega\in \Omega}\rangle$ such that $\langle P;
  \sqsubseteq, C\rangle$ is dcpo presentation and $\langle P;
  \sqsubseteq,(\omega_P)_{\omega\in \Omega}\rangle$ is a
  preordered algebra.  Let $s(x_1,\dotsc,x_n)$ and $t(x_1,\dotsc ,
  x_n)$ be $n$-ary $\Omega$-terms.  If for every $\omega \in \Omega$,
  $\omega_P\colon P^{\alpha(\omega)} \to P$ is cover-stable,
  then we can define an $\Omega$-algebra structure on $\overline P$ by
  taking $\omega_{\overline P} := \overline{ \omega_P}$ and
  $P \models s\preccurlyeq t$ implies $\overline P \models
  s\preccurlyeq t$.
\end{proposition}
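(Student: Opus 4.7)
The plan is to proceed in three stages. First, since every $\omega_P$ is cover-stable, Proposition \ref{dcpo:coverstable} produces a Scott-continuous extension $\overline{\omega_P}\colon \overline P^{\alpha(\omega)}\to \overline P$, and I would take $\omega_{\overline P}:=\overline{\omega_P}$ as the intended algebra structure. The two substantive steps are then to extend this from single operation symbols to arbitrary $\Omega$-terms, and afterwards to transport the inequality $s\preccurlyeq t$ from $P$ up to $\overline P$.

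For the first step, I would show by induction on the complexity of an $n$-ary term $t$ that simultaneously (a) the term function $t^{\overline P}\colon \overline P^n\to \overline P$ is Scott-continuous, and (b) $t^{\overline P}(\langle x_1\rangle,\dotsc,\langle x_n\rangle)=\langle t^P(x_1,\dotsc,x_n)\rangle$ for every $\vec x\in P^n$. For a variable both parts are immediate. For $t=\omega(t_1,\dotsc,t_k)$, part (a) follows because $t^{\overline P}$ is a composite of Scott-continuous maps, while part (b) follows by applying the defining formula for $\overline{\omega_P}$ from Proposition \ref{dcpo:coverstable} to the principal $C$-ideals $\langle t_j^P(\vec x)\rangle$ supplied by the inductive hypothesis, together with the fact that $\overline{\omega_P}$ genuinely extends $\omega_P$ on principal ideals.

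For the second step, suppose $P\models s\preccurlyeq t$. Then (b) combined with monotonicity of $\langle \cdot \rangle$ yields, for every $\vec x\in P^n$,
\[ s^{\overline P}(\langle x_1\rangle,\dotsc,\langle x_n\rangle) = \langle s^P(\vec x)\rangle \subseteq \langle t^P(\vec x)\rangle = t^{\overline P}(\langle x_1\rangle,\dotsc,\langle x_n\rangle). \]
To lift this inclusion to arbitrary $\vec X\in \overline P^n$, I argue one coordinate at a time: fix $X_2,\dotsc,X_n\in \overline P$ and set $S_1:=\{X_1\in\overline P\mid s^{\overline P}(X_1,X_2,\dotsc,X_n)\subseteq t^{\overline P}(X_1,X_2,\dotsc,X_n)\}$. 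By Scott-continuity of both term functions in the first argument, $S_1$ is closed under directed joins; so by the very definition of $\overline P$ in Definition \ref{dcpo:Overline-P}, $S_1=\overline P$ as soon as it contains every $\langle x\rangle$. Iterating this argument through the remaining $n-1$ coordinates reduces the general case to the principal-ideal case already settled.

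The main obstacle is the extension clause (b): one must be sure that $\overline{\omega_P}(\langle y_1\rangle,\dotsc,\langle y_k\rangle)$ does not reach beyond $\langle \omega_P(y_1,\dotsc,y_k)\rangle$, which requires unpacking how downward closure and cover-closure generate $\langle y_j\rangle$ and exploiting monotonicity together with cover-stability of $\omega_P$. This is, however, precisely the content of Proposition \ref{dcpo:coverstable}, which identifies $\overline{\omega_P}$ as the unique Scott-continuous extension of $\omega_P$.
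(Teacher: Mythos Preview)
The paper does not give its own proof of this proposition: it is quoted in the preliminaries as \cite[Proposition 4.2]{JMV2008} and left without argument, so there is nothing in the paper to compare your attempt against directly.

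Your argument is correct and is essentially the expected one. The three ingredients you isolate---Proposition~\ref{dcpo:coverstable} to obtain Scott-continuous extensions $\overline{\omega_P}$ that agree with $\omega_P$ on principal $C$-ideals, induction on term complexity to propagate both Scott-continuity and the extension identity $t^{\overline P}(\langle x_1\rangle,\dotsc,\langle x_n\rangle)=\langle t^P(\vec x)\rangle$, and the minimality clause in Definition~\ref{dcpo:Overline-P} as an induction principle for the coordinate-by-coordinate lift---are exactly what the proof in \cite{JMV2008} uses. One small presentational wrinkle: when you write ``fix $X_2,\dotsc,X_n\in\overline P$'' and then assert that $S_1$ contains every $\langle x\rangle$, that containment is not yet known for \emph{arbitrary} $X_2,\dotsc,X_n$; it only becomes available after the later coordinates have themselves been reduced to principal ideals. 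Your closing sentence (``iterating \dots\ reduces the general case to the principal-ideal case'') shows you intend this as a downward reduction from $n$ free coordinates to $n-1$, and read that way the induction is sound---but it would be clearer to run the induction explicitly in that order.
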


\subsection{Free directed completions}

The free directed join completion and the free co-directed meet completion of a 
poset are given by the posets of filters and of ideals of the poset, respectively.
For our purposes, an abstract characterisation of these completions will be important.
The following results 
date back to \cite{Schmidt2} and 
are very well known. Sources for this material are \cite{Plotkin81}, Section~6, and
\cite{GHK:03}, Sections I-4 and IV-1 and \cite{GP2008}.

\begin{definition}\label{Def:co-Completion}
  Let $\mathbb{P} = \langle P, \leq\rangle$ be a poset. By
  $\uparr_\mathbb{P} \colon \mathbb{P}\to \mathcal{F}(\mathbb{P})$ we denote the
  \emph{co-directed meet completion} of $\mathbb{P}$, which is
  characterized by the following properties:
  \begin{enumerate}
  \item $\langle \mathcal{F} (\mathbb{P}), \leq \rangle$ is a co-dcpo,
  \item $\uparr_\mathbb{P} \colon \mathbb{P}\to
    \mathcal{F}(\mathbb{P})$ is an order-embedding,
  \item for every $x\in \mathcal{F} (\mathbb{P})$, 
$\{a\in P \mid x\leq \uparr_\mathbb{P} a\}$
   is co-directed and $x= \bigwedge \{ \uparr_\mathbb{P} a \mid x\leq \uparr_\mathbb{P} a\}$,
  \item for all co-directed $S \subseteq \mathcal{F} (\mathbb{P})$ and
    all $a\in P$, if $\bigwedge S 
   \leq \uparr_\mathbb{P} a$ then there exists
    $s\in S$ such that $s
   \leq \uparr_\mathbb{P} a$.
  \end{enumerate}
\end{definition}
\begin{proposition}
  If $\mathbb{P}$ and $\mathbb{Q}$ are posets, then
  $\mathcal{F} (\mathbb{P}\times \mathbb{Q}) \cong
  \mathcal{F} (\mathbb{P}) \times \mathcal{F} (\mathbb{Q})$.
\end{proposition}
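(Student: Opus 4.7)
The plan is to verify that $\mathcal{F}(\mathbb{P}) \times \mathcal{F}(\mathbb{Q})$, together with the map $\iota \colon \mathbb{P} \times \mathbb{Q} \to \mathcal{F}(\mathbb{P}) \times \mathcal{F}(\mathbb{Q})$ defined by $\iota(p,q) = (\uparr_\mathbb{P} p, \uparr_\mathbb{Q} q)$, satisfies the four characterising clauses of Definition~\ref{Def:co-Completion} for the product poset $\mathbb{P} \times \mathbb{Q}$. Since the term ``the co-directed meet completion'' already presupposes uniqueness of any structure satisfying these clauses (up to order-isomorphism commuting with the embedding), this will yield the stated isomorphism. Throughout I will use that both the order and arbitrary existing meets in a product of posets are computed coordinatewise.

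Clauses (1) and (2) are immediate: a product of co-dcpos is a co-dcpo, and the componentwise product of two order-embeddings is an order-embedding. For clause (3), fix $(x,y) \in \mathcal{F}(\mathbb{P}) \times \mathcal{F}(\mathbb{Q})$. Clause (3) applied in each factor supplies co-directed sets $A = \{a \in P \mid x \leq \uparr_\mathbb{P} a\}$ and $B = \{b \in Q \mid y \leq \uparr_\mathbb{Q} b\}$ whose images under $\uparr$ have meets $x$ and $y$ respectively. One checks directly that $\{(a,b) \in P \times Q \mid (x,y) \leq \iota(a,b)\} = A \times B$, that this set is co-directed in $\mathbb{P} \times \mathbb{Q}$, and that the coordinatewise meet of $\iota(A \times B)$ is $(x,y)$.

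The main work lies in clause (4), which is the step I expect to be the principal obstacle since it requires genuinely combining co-directedness in the two factors. Suppose $S \subseteq \mathcal{F}(\mathbb{P}) \times \mathcal{F}(\mathbb{Q})$ is co-directed and $\bigwedge S \leq \iota(a,b)$ for some $(a,b) \in \mathbb{P} \times \mathbb{Q}$. The projections $\pi_1(S)$ and $\pi_2(S)$ are co-directed in $\mathcal{F}(\mathbb{P})$ and $\mathcal{F}(\mathbb{Q})$ respectively, and because meets are coordinatewise we have $\bigwedge \pi_1(S) \leq \uparr_\mathbb{P} a$ and $\bigwedge \pi_2(S) \leq \uparr_\mathbb{Q} b$. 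Applying clause (4) in each factor yields $x \in \pi_1(S)$ and $y \in \pi_2(S)$ with $x \leq \uparr_\mathbb{P} a$ and $y \leq \uparr_\mathbb{Q} b$. This does not yet furnish a single element of $S$ below $\iota(a,b)$: we only obtain witnesses $(x, y') \in S$ and $(x', y) \in S$. The final step is to invoke co-directedness of $S$ to produce $(x_0, y_0) \in S$ with $(x_0, y_0) \leq (x, y')$ and $(x_0, y_0) \leq (x', y)$; then $x_0 \leq x \leq \uparr_\mathbb{P} a$ and $y_0 \leq y \leq \uparr_\mathbb{Q} b$, so $(x_0, y_0) \leq \iota(a,b)$, completing the verification.
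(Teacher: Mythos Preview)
Your proof is correct. The paper does not supply its own proof of this proposition; it is stated among several well-known facts about free (co-)directed completions, with pointers to the literature (Schmidt, Plotkin's Pisa notes, Gierz et al.). Your verification of the four characterising clauses of Definition~\ref{Def:co-Completion} for the product---in particular the use of co-directedness of $S$ in clause~(4) to merge the two coordinatewise witnesses $(x,y')$ and $(x',y)$ into a single element of $S$ below $\iota(a,b)$---is exactly the standard argument and fills the gap left by the paper.
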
 
If $f\colon \mathbb{P} \to \mathbb{Q}$ is an order-preserving map
between posets, 
then $f$ has a unique co-Scott continuous extension, 
$f^\mathcal{F} \colon \mathcal{F}(\mathbb{P}) \to\mathcal{F}(\mathbb{Q})$, 
defined as follows: \[ f^\mathcal{F} \colon x
\mapsto \bigwedge \{\uparr_\mathbb{Q} f(a) \mid x \leq
\uparr_\mathbb{P} a\}.\] 
Given an ordered algebra $\mathbb{A}=\langle A, \leq;
(\omega_\mathbb{A})_{\omega\in \Omega} \rangle$ such that every
$\omega_\mathbb{A}$ is order-preserving, we can define an
algebra structure on $\mathcal{F}(\mathbb{A})$ by taking
$\omega_{\mathcal{F}(\mathbb{A})} :=
(\omega_\mathbb{A})^\mathcal{F}$. 
\begin{proposition} \label{Filter:Canonicity}
  Let $s(x_1,\dotsc,x_n)$ and $t(x_1,\dotsc , x_n)$ be $n$-ary
  $\Omega$-terms and let $\mathbb{A}$ be an ordered
  $\Omega$-algebra. If $\mathbb{A}\models s\preccurlyeq t$ then also
  $\mathcal{F}(\mathbb{A}) \models s\preccurlyeq t$.
\end{proposition}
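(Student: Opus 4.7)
The plan is to show by induction on term structure that for any $\Omega$-term $t(x_1,\ldots,x_n)$, the term function $t^{\mathcal{F}(\mathbb{A})}$ on the completion coincides with the co-Scott continuous extension $(t^\mathbb{A})^\mathcal{F}$ of the term function on $\mathbb{A}$, modulo the isomorphism $\mathcal{F}(\mathbb{A}^n)\cong \mathcal{F}(\mathbb{A})^n$. Combined with the explicit formula for that extension, this lets us write
\[ t^{\mathcal{F}(\mathbb{A})}(x_1,\ldots,x_n) \;=\; \bigwedge \bigl\{\, \uparr_\mathbb{A} t^\mathbb{A}(a_1,\ldots,a_n) \;\bigm|\; x_i\leq \uparr_\mathbb{A} a_i \text{ for all } i\,\bigr\}, \]
and similarly for $s$, with the same indexing family on the right-hand side.

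For the induction, the base case of a variable (or constant) is immediate from property~(3) of Definition~\ref{Def:co-Completion}. For the step, suppose $t=\omega(t_1,\ldots,t_k)$ and the claim holds for each $t_j$. By property~(3) and the product isomorphism, each $x_i$ is the meet of the co-directed family $\{\uparr_\mathbb{A} a_i : x_i\leq \uparr_\mathbb{A} a_i\}$, and jointly $(x_1,\ldots,x_n)$ is the meet in $\mathcal{F}(\mathbb{A})^n\cong\mathcal{F}(\mathbb{A}^n)$ of the co-directed family indexed by these tuples $\vec a$. Applying the inductive hypothesis gives $t_j^{\mathcal{F}(\mathbb{A})}(\vec x)=\bigwedge_{\vec a}\uparr_\mathbb{A} t_j^\mathbb{A}(\vec a)$, a co-directed meet since each $t_j^\mathbb{A}$ is order-preserving. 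Since $\omega^{\mathcal{F}(\mathbb{A})}$ is by construction the co-Scott continuous extension of $\omega^\mathbb{A}$ through the product isomorphism, it commutes with this co-directed meet, so
\[ \omega^{\mathcal{F}(\mathbb{A})}\bigl(t_1^{\mathcal{F}(\mathbb{A})}(\vec x),\ldots, t_k^{\mathcal{F}(\mathbb{A})}(\vec x)\bigr) \;=\; \bigwedge_{\vec a} \uparr_\mathbb{A} \omega^\mathbb{A}\bigl(t_1^\mathbb{A}(\vec a),\ldots,t_k^\mathbb{A}(\vec a)\bigr), \]
which is exactly the desired formula for $t^{\mathcal{F}(\mathbb{A})}(\vec x)$.

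With the representation lemma in hand, the conclusion is immediate: if $\mathbb{A}\models s\preccurlyeq t$, then $s^\mathbb{A}(\vec a)\leq t^\mathbb{A}(\vec a)$ for every $\vec a\in A^n$, hence $\uparr_\mathbb{A} s^\mathbb{A}(\vec a)\leq \uparr_\mathbb{A} t^\mathbb{A}(\vec a)$ in $\mathcal{F}(\mathbb{A})$ since $\uparr_\mathbb{A}$ is order-preserving. Taking the meet over the common indexing set $\{\vec a: x_i\leq \uparr_\mathbb{A} a_i \text{ for all } i\}$ yields $s^{\mathcal{F}(\mathbb{A})}(\vec x)\leq t^{\mathcal{F}(\mathbb{A})}(\vec x)$ for every $\vec x\in\mathcal{F}(\mathbb{A})^n$, which is $\mathcal{F}(\mathbb{A})\models s\preccurlyeq t$.

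The main obstacle is the inductive step, specifically justifying that $\omega^{\mathcal{F}(\mathbb{A})}$, whose co-Scott continuity is given one variable at a time via $(\omega_\mathbb{A})^\mathcal{F}$, distributes over a joint co-directed meet in the product. This is precisely the purpose of the product isomorphism $\mathcal{F}(\mathbb{A}^n)\cong\mathcal{F}(\mathbb{A})^n$: it reduces the joint meet to a single co-directed meet in $\mathcal{F}(\mathbb{A}^n)$, on which co-Scott continuity of the unary extension applies directly. Everything else is bookkeeping with the defining properties of $\mathcal{F}(\mathbb{P})$ listed in Definition~\ref{Def:co-Completion}.
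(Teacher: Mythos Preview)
The paper states this proposition in its preliminaries section without proof, citing it as well known (with references to Schmidt, Plotkin, Gierz et al., and Gehrke--Priestley). Your argument is correct and is essentially the standard one: you establish by induction that each term function $t^{\mathcal{F}(\mathbb{A})}$ coincides with the co-Scott continuous extension $(t^\mathbb{A})^\mathcal{F}$ of the term function on $\mathbb{A}$, whence the transfer of the inequation follows at once by monotonicity of $\uparr_\mathbb{A}$ and of meets over a common index set.

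You also correctly isolate the only nontrivial point in the induction, namely that co-Scott continuity of $\omega^{\mathcal{F}(\mathbb{A})}$ must be applied to a \emph{joint} co-directed meet in the product, and you dispatch it via the isomorphism $\mathcal{F}(\mathbb{A})^{\alpha(\omega)}\cong\mathcal{F}(\mathbb{A}^{\alpha(\omega)})$ together with the fact that all $k$ component meets share the same co-directed index set $\{\vec a : x_i\leq\uparr_\mathbb{A} a_i\}$. One small notational slip: in your final paragraph you write $\mathcal{F}(\mathbb{A}^n)$ where the relevant exponent is the arity $\alpha(\omega)$ of the operation symbol at hand, not the number $n$ of free variables of the outer term; this does not affect the argument.
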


\begin{proposition} \label{Filters:Lattice} Let $\mathbb{A} = \langle
  A; \wedge,\vee,0,1\rangle$ be a lattice.  Then $\langle
  \mathcal{F}(\mathbb{A}), \wedge^\mathcal{F}, \vee^\mathcal{F}, 0,1
  \rangle$ is a (complete) lattice and $\uparr_\mathbb{A} \colon
  \mathbb{A} \to \mathcal{F}(\mathbb{A})$ is a lattice embedding.
\end{proposition}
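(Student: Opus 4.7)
The plan is to view $\mathbb{A}$ as an ordered algebra with signature $\Omega=\{\wedge,\vee,0,1\}$, in which every operation is order-preserving, and then to apply Proposition~\ref{Filter:Canonicity}.

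First, every axiom of a bounded lattice---commutativity, associativity, idempotence, absorption, and the bound laws $x\wedge 1\approx x$, $x\vee 0\approx x$---is an equation equivalent to a pair of inequations $s\preccurlyeq t$, $t\preccurlyeq s$. Since $\mathbb{A}$ satisfies each of them, Proposition~\ref{Filter:Canonicity} forces $\mathcal{F}(\mathbb{A})$ to satisfy them too under the lifted operations $\wedge^\mathcal{F}$, $\vee^\mathcal{F}$, $0^\mathcal{F}$, $1^\mathcal{F}$. Hence $\langle\mathcal{F}(\mathbb{A}),\wedge^\mathcal{F},\vee^\mathcal{F},0^\mathcal{F},1^\mathcal{F}\rangle$ is a bounded lattice.

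Next, for completeness, I would combine this lattice structure with the co-dcpo property of $\mathcal{F}(\mathbb{A})$ from Definition~\ref{Def:co-Completion}(1). Given any $S\subseteq\mathcal{F}(\mathbb{A})$, the set of finite $\wedge^\mathcal{F}$-meets of elements of $S$, together with $1^\mathcal{F}$ for the empty meet, is co-directed; its meet exists in the co-dcpo $\mathcal{F}(\mathbb{A})$ and serves as $\bigwedge S$. With all meets and a least element in hand, all joins then exist via $\bigvee T=\bigwedge\{x\in\mathcal{F}(\mathbb{A}) : t\leq x \text{ for all } t\in T\}$, so $\mathcal{F}(\mathbb{A})$ is a complete lattice.

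Finally, $\uparr_\mathbb{A}$ is an order-embedding by Definition~\ref{Def:co-Completion}(2), hence injective. To show it preserves all operations, it suffices to check $\omega^\mathcal{F}(\uparr_\mathbb{A} a_1,\dots,\uparr_\mathbb{A} a_n) = \uparr_\mathbb{A}(\omega_\mathbb{A}(a_1,\dots,a_n))$ for each $\omega\in\Omega$. For unary order-preserving $f$ this is immediate from the defining formula $f^\mathcal{F}(x)=\bigwedge\{\uparr_\mathbb{A} f(a) : x\leq \uparr_\mathbb{A} a\}$ evaluated at $x=\uparr_\mathbb{A} b$, where the infimum is witnessed by $a=b$; the binary case for $\wedge$ and $\vee$ then reduces to the unary one via the product isomorphism $\mathcal{F}(\mathbb{A}\times\mathbb{A})\cong\mathcal{F}(\mathbb{A})^2$, and the nullary case for $0,1$ is immediate. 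Combined with the fact that it is an order-embedding, this makes $\uparr_\mathbb{A}$ a lattice embedding. The only genuinely delicate step is the bookkeeping needed to confirm that $\wedge^\mathcal{F}$ and $\vee^\mathcal{F}$ agree with $\wedge$ and $\vee$ on principal filters; everything else falls out of Proposition~\ref{Filter:Canonicity} and Definition~\ref{Def:co-Completion}.
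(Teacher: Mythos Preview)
Your proof is correct. Note, however, that the paper does not supply its own proof of this proposition: it is stated in the preliminaries as a well-known fact, with references to \cite{Schmidt2}, \cite{Plotkin81}, \cite{GHK:03}, and \cite{GP2008}. So there is nothing to compare against directly.

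That said, a couple of small remarks. First, when you pass from ``$\langle\mathcal{F}(\mathbb{A}),\wedge^\mathcal{F},\vee^\mathcal{F}\rangle$ is a lattice'' to ``combine this with the co-dcpo structure,'' you are implicitly using that the lattice order induced by $\wedge^\mathcal{F},\vee^\mathcal{F}$ coincides with the given order $\leq$ on $\mathcal{F}(\mathbb{A})$. This does follow from your method---the inequations $x\wedge y\preccurlyeq x$ and $x\preccurlyeq x\vee y$ lift via Proposition~\ref{Filter:Canonicity}, and together with idempotence and monotonicity they force $\wedge^\mathcal{F}$ and $\vee^\mathcal{F}$ to be the meet and join for $\leq$---but it is worth making explicit. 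Second, your phrasing ``the binary case reduces to the unary one via the product isomorphism'' is slightly off: what the product isomorphism gives you is the $n$-ary formula $f^\mathcal{F}(x_1,\dotsc,x_n)=\bigwedge\{\uparr_\mathbb{B} f(a_1,\dotsc,a_n)\mid x_i\leq\uparr_{\mathbb{A}_i} a_i\}$, after which the same ``infimum attained at $a_i=b_i$'' argument applies directly in each coordinate. With these clarifications your argument is complete and is in fact a pleasant, self-contained derivation from the machinery already in the paper.
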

We denote the meet and join operation of $\mathcal{F}(\mathbb{A})$ by
$\wedge$ and $\vee$ respectively; also, we will let $\bigwedge$
denote arbitrary meets in $\mathcal{F}(\mathbb{A})$. 
Given lattices $\mathbb{A}_1, \dotsc, \mathbb{A}_n, \mathbb{B}$, we say $f\colon
A_1 \times \dotsb \times A_n \to B$ is an
\emph{operator} if for every $1 \leq i \leq n$, all $a_i, b_i \in
A_i$ and all $a_j\in A_j$, $j \neq i$, we have
\begin{gather*}
  f(a_1, \dotsc, a_{i-1}, a_i \vee b_i, a_{i+1}, \dotsc, a_n) =\\
  f(a_1, \dotsc, a_{i-1}, a_i, a_{i+1}, \dotsc, a_n) \vee f(a_1,
  \dotsc, a_{i-1}, b_i, a_{i+1}, \dotsc, a_n).  
\end{gather*}
\begin{proposition}\label{Filter:Operator}
  If $f\colon \mathbb{A}_1 \times \dotsb \times \mathbb{A}_n \to
  \mathbb{B}$ is an operator, then so is $f^\mathcal{F} \colon
  \mathcal{F}(\mathbb{A}_1) \times \dotsb \times
  \mathcal{F}(\mathbb{A}_n) \to \mathcal{F}(\mathbb{B})$.
\end{proposition}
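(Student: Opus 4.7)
The plan is to show directly that $f^{\mathcal F}$ preserves binary joins in each of its $n$ coordinates, working from the defining formula for $f^{\mathcal F}$ and the characterizing properties (3)--(4) of $\mathcal F$ in Definition \ref{Def:co-Completion}. Fix a coordinate $i$, a tuple $(x_1,\dotsc,x_n) \in \mathcal F(\mathbb A_1) \times \dotsb \times \mathcal F(\mathbb A_n)$, and elements $x_i, y_i \in \mathcal F(\mathbb A_i)$; write $\vec x(z) := (x_1,\dotsc,x_{i-1},z,x_{i+1},\dotsc,x_n)$. The inequality $f^{\mathcal F}(\vec x(x_i \vee y_i)) \geq f^{\mathcal F}(\vec x(x_i)) \vee f^{\mathcal F}(\vec x(y_i))$ is immediate from the fact that $f^{\mathcal F}$ is order-preserving.

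For the reverse inequality, property (3) of Definition \ref{Def:co-Completion} allows me to express $f^{\mathcal F}(\vec x(x_i)) \vee f^{\mathcal F}(\vec x(y_i))$ as the co-directed meet of all $\uparr_\mathbb{B} d$ above it in $\mathcal F(\mathbb B)$; hence it suffices to fix $d \in B$ with $\uparr_\mathbb{B} d$ above both $f^{\mathcal F}(\vec x(x_i))$ and $f^{\mathcal F}(\vec x(y_i))$, and show $\uparr_\mathbb{B} d \geq f^{\mathcal F}(\vec x(x_i \vee y_i))$. Applying property (4) to the defining co-directed meet for $f^{\mathcal F}(\vec x(x_i))$ then furnishes a concrete tuple $\vec a \in A_1 \times \dotsb \times A_n$ with $x_k \leq \uparr_{\mathbb A_k} a_k$ for all $k$ and $f(\vec a) \leq d$; analogously I obtain $\vec b$ with $x_k \leq \uparr_{\mathbb A_k} b_k$ for $k \neq i$, $y_i \leq \uparr_{\mathbb A_i} b_i$, and $f(\vec b) \leq d$.

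The key trick is to combine $\vec a$ and $\vec b$ asymmetrically into a single witness $\vec c$ by taking $c_i := a_i \vee b_i$ and $c_j := a_j \wedge b_j$ for $j \neq i$. Because each $\uparr_{\mathbb A_k}$ is a lattice embedding (Proposition \ref{Filters:Lattice}), it follows that $x_j \leq \uparr_{\mathbb A_j} c_j$ for $j \neq i$ and $x_i \vee y_i \leq \uparr_{\mathbb A_i} c_i$. Since $f$ is an operator in coordinate $i$ and monotone in the other coordinates, $f(\vec c) \leq f(\vec a) \vee f(\vec b) \leq d$, and consequently $\uparr_\mathbb{B} d \geq \uparr_\mathbb{B} f(\vec c) \geq f^{\mathcal F}(\vec x(x_i \vee y_i))$, as desired.

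The main obstacle is spotting this asymmetric way to combine $\vec a$ and $\vec b$: one must join in the coordinate where the operator axiom is being tested while meeting in all the others, so that the old inequalities $x_j \leq \uparr_{\mathbb A_j} c_j$ for $j \neq i$ are preserved. It is precisely this step which forces each $\mathbb A_j$ to be a lattice (so that meets in the other coordinates exist), and which invokes Proposition \ref{Filters:Lattice} to transport these meets across $\uparr_{\mathbb A_j}$.
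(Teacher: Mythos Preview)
Your argument is correct. The only small thing worth tightening is the tacit claim that the defining meet for $f^{\mathcal F}(\vec x(x_i))$ is co-directed, which is needed to invoke property~(4) of Definition~\ref{Def:co-Completion}; this follows because each index set $\{a_k\in A_k \mid x_k \leq \uparr_{\mathbb A_k} a_k\}$ is co-directed by property~(3), the product of co-directed sets is co-directed, and $f$ is monotone.

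As for comparison with the paper: Proposition~\ref{Filter:Operator} is stated among the preliminaries on free co-directed meet completions without proof, with the reader being referred to the standard sources \cite{Schmidt2,Plotkin81,GHK:03,GP2008}. So there is no proof in the paper to compare against; your direct argument, exploiting the asymmetric combination $c_i = a_i \vee b_i$ and $c_j = a_j \wedge b_j$ for $j\neq i$, is exactly the natural way to establish the result from the abstract characterization in Definition~\ref{Def:co-Completion}.
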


\subsection{Canonical extension}
Below we introduce the canonical extension  of a
lattice and the canonical extension of an order-preserving map
between lattices \cite{ GH2001}. 
Let $\mathbb{A}$ be a lattice. A \emph{lattice completion} of
$\mathbb{A}$ is a lattice embedding $e\colon \mathbb{A} \to
\mathbb{C}$ of $\mathbb{A}$ into a complete lattice
$\mathbb{C}$. Two completions of $\mathbb{A}$, $e_1\colon
\mathbb{A} \to \mathbb{C}_1$ and $e_2\colon \mathbb{A} \to
\mathbb{C}_2$, are isomorphic if there exists a lattice isomorphism
$f\colon \mathbb{C}_1 \to \mathbb{C}_2$ such that $fe_1=e_2$.
\begin{definition}
  Let $e\colon \mathbb{A} \to \mathbb{C}$ be a
  lattice completion of $\mathbb{A}$. We call $e\colon \mathbb{A} \to
  \mathbb{C}$ a \emph{canonical extension} of $\mathbb{A}$ if the
  following two conditions hold:
  \begin{itemize}
  \item (density) for all $u,v \in C$ such that $u\nleq v$, there
    exist a filter $F\subseteq A$ and an ideal $I \subseteq A$ such
    that
    \[{\textstyle\bigwedge} e[F] \leq u,\, {\textstyle\bigwedge} e[F]
    \nleq v,\, v \leq {\textstyle\bigvee} e[I] \text{ and } u\nleq
    {\textstyle\bigvee} e[I];\]
      \item (compactness) for all ideals $I\subseteq A$ and all filters
    $F\subseteq A$, if $\bigwedge e[F] \leq \bigvee e[I]$ then there
    exist $b\in F$ and $a\in I$ such that $b\leq a$.
  \end{itemize}
\end{definition}

\begin{proposition}[\cite{GH2001}, Propositions 2.6 and 2.7]
  Every lattice $\mathbb{A}$ has a canonical extension, denoted
  $e_\mathbb{A}\colon \mathbb{A} \to \mathbb{A}^\sigma$. Moreover,
  $e_\mathbb{A}\colon \mathbb{A} \to \mathbb{A}^\sigma$ is unique up
  to isomorphism of completions.
\end{proposition}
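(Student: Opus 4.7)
The plan is to handle existence and uniqueness separately, following \cite{GH2001}. For existence, I would construct $\mathbb{A}^\sigma$ as the complete lattice of Galois-stable pairs $(\mathcal{F},\mathcal{I})$ arising from the polarity $(\operatorname{Filt}(\mathbb{A}),\operatorname{Idl}(\mathbb{A}),R)$, where $\operatorname{Filt}(\mathbb{A})$ and $\operatorname{Idl}(\mathbb{A})$ are the sets of filters and ideals of $\mathbb{A}$ and $F \mathrel{R} I$ iff $F\cap I\neq\emptyset$. These Galois-stable pairs form a complete lattice under the natural order in which $(\mathcal{F},\mathcal{I})$ grows as $\mathcal{F}$ grows. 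The embedding $e_\mathbb{A}$ would send $a\in A$ to $(\{F:a\in F\},\{I:a\in I\})$; using principal filters and principal ideals as test elements, one checks that this pair is Galois-stable, that $e_\mathbb{A}$ reflects the order, and that it preserves finite joins and meets, so that it is indeed a lattice embedding.

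With $\mathbb{A}^\sigma$ in hand, I would verify density and compactness. Compactness follows by unravelling the definitions: the inequality $\bigwedge e_\mathbb{A}[F]\leq\bigvee e_\mathbb{A}[I]$ in $\mathbb{A}^\sigma$ translates into an inclusion between Galois closures that, by the very definition of $R$, reduces to $F\cap I\neq\emptyset$. Density reduces to the claim that every Galois-stable pair $u=(\mathcal{F},\mathcal{I})$ satisfies $u=\bigvee\{\bigwedge e_\mathbb{A}[F]:F\in\mathcal{F}\}=\bigwedge\{\bigvee e_\mathbb{A}[I]:I\in\mathcal{I}\}$; the density clause as stated is then the contrapositive. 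I expect the main obstacle on the existence side to be carrying out this join/meet-generation argument cleanly and without invoking the Axiom of Choice (as promised in the introduction); this requires a careful analysis of what membership in the Galois closure of a single filter or ideal amounts to, and circumvents the Prime Filter Theorem used in the classical J\'onsson--Tarski construction.

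For uniqueness, suppose $e_i\colon\mathbb{A}\to\mathbb{C}_i$ ($i=1,2$) are two canonical extensions, and set $F(\mathbb{C}_i)=\{\bigwedge e_i[F]:F\in\operatorname{Filt}(\mathbb{A})\}$ and dually $I(\mathbb{C}_i)$. Compactness forces $F\mapsto\bigwedge e_i[F]$ to be an order-reversing bijection from $\operatorname{Filt}(\mathbb{A})$ onto $F(\mathbb{C}_i)$, and dually for ideals, giving canonical bijections $\phi\colon F(\mathbb{C}_1)\to F(\mathbb{C}_2)$ and $\psi\colon I(\mathbb{C}_1)\to I(\mathbb{C}_2)$ that commute with the embeddings from $\mathbb{A}$. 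Applying compactness in both $\mathbb{C}_1$ and $\mathbb{C}_2$ also yields the pivotal comparison that, for $x\in F(\mathbb{C}_1)$ and $y\in I(\mathbb{C}_1)$, $x\leq y$ in $\mathbb{C}_1$ iff $\phi(x)\leq\psi(y)$ in $\mathbb{C}_2$. Density on both sides then allows me to define $\Phi\colon\mathbb{C}_1\to\mathbb{C}_2$ by $\Phi(u)=\bigvee\{\phi(x):x\in F(\mathbb{C}_1),\,x\leq u\}$ and prove simultaneously that this agrees with the dual meet formula $\bigwedge\{\psi(y):u\leq y,\,y\in I(\mathbb{C}_1)\}$. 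I expect the main delicate step here to be order-reflection for $\Phi$: given $\Phi(u)\leq\Phi(v)$ but $u\nleq v$, I would use density in $\mathbb{C}_1$ to produce a separating pair of filter and ideal elements, transport them across $\phi$ and $\psi$, and apply the pivotal comparison in $\mathbb{C}_2$ to derive a contradiction, yielding the desired lattice isomorphism $\Phi$ with $\Phi e_1=e_2$.
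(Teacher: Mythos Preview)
The paper does not give its own proof of this proposition; it is stated in the preliminaries as a known result, with the proof deferred entirely to the cited reference \cite{GH2001}. Your proposal correctly sketches the polarity/Galois-connection construction and the uniqueness argument from \cite{GH2001}, so in that sense you are filling in exactly what the paper's citation points to, and the outline is sound.

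That said, it is worth noting that the main contribution of the present paper is an \emph{alternative} existence proof: Theorem~\ref{Delta:Can-Ext} constructs the canonical extension as $\overline{\Delta(\mathbb{A})}$, the dcpo freely generated by the presentation $\Delta(\mathbb{A})$ of Definition~\ref{Delta:Definition}, and verifies density and compactness directly for that concrete object using Lemmas~\ref{Delta:Sup-Presentation} and~\ref{Delta:Properties}. Your Galois-stable-pairs approach is the classical choice-free one; the paper's dcpo-presentation route is likewise choice-free but makes the two-stage structure (free co-directed meet completion followed by a selected directed-join completion) explicit, which is what later allows the canonicity result of Theorem~\ref{GJ94} to be read off from the general dcpo-algebra machinery of \cite{JMV2008}. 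Uniqueness is not re-proved anywhere in the paper and continues to rest on \cite{GH2001}; your sketch of that argument via the bijections on filter and ideal elements and the pivotal $\phi$/$\psi$ comparison is the standard one and is correct.
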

We will omit the subscript on $e_\mathbb{A}$ if it is clear from the
context what $\mathbb{A}$ is. Given $e\colon \mathbb{A} \to
\mathbb{A}^\sigma$, we define $ K(\mathbb{A}^\sigma) := \{ {\textstyle
  \bigwedge} e[F] \mid F\subseteq A \text{ a filter}\}$ to be the
\emph{closed elements} of $\mathbb{A}^\sigma$.
\begin{definition}
  Let 
 $f\colon \mathbb{A}_1\times \dotsb \times \mathbb{A}_n\to
  \mathbb{B}$ be an order-preserving map 
 where $\mathbb{A}_1,\dotsc \mathbb{A}_n$ and $\mathbb{B}$
  are lattices. We define $f^\sigma
  \colon \mathbb{A}_1^\sigma \times \dotsb \times \mathbb{A}_n^\sigma
  \to \mathbb{B}^\sigma$ by first putting
  \[ f^\sigma \colon (x_1,\dotsc, x_n) \mapsto \bigwedge \{
  e_\mathbb{B}(f(a_1,\dotsc, a_n)) \mid (x_1, \dotsc, x_n) \leq (a_1,
  \dotsc, a_n)\}\] for all tuples of closed elements $(x_1,\dotsc,
  x_n) \in K(\mathbb{A}_1^\sigma) \times \dotsb \times
  K(\mathbb{A}_n^\sigma)$.  We then define $f^\sigma$ as follows on
  arbitrary tuples $(u_1, \dotsc, u_n) \in \mathbb{A}_1^\sigma \times
  \dotsb \mathbb{A}_n^\sigma$:
  \begin{gather*} f^\sigma \colon (u_1, \dotsc, u_n) \mapsto \bigvee
    \big\{ f^\sigma (x_1,\dotsc, x_n) \mid \\ (u_1, \dotsc, u_n) \geq
    (x_1,\dotsc, x_n) \in K(\mathbb{A}_1^\sigma) \times \dotsb \times
    K(\mathbb{A}_n^\sigma) \big\}.
  \end{gather*}
  
  For information on the naturality of this definition in the distributive setting,
  see \cite{GJ2004}, Theorem~2.15.
  \end{definition}

\section{A dcpo presentation of the canonical extension} \label{section:results}

\begin{definition} \label{Delta:Definition} Given a lattice
  $\mathbb{A}$, we define a dcpo presentation 
  \[\Delta(\mathbb{A}) :=
  \langle \mathcal{F}(\mathbb{A}); \leq, C_\mathbb{A}\rangle
  \]
 where
  \begin{align*} C_\mathbb{A}:= \big\{ (x,U) &\in
    \mathcal{F}(\mathbb{A}) \times
    \mathcal{P}(\mathcal{F}(\mathbb{A})) \mid U \text{ non-empty,
      directed},\\& \forall I \in \operatorname{Idl} (\mathbb{A})
    [(\forall x' \in U\, \exists a' \in I,\, x' \leq \uparr_\mathbb{A}
    a' )
    \Rightarrow \exists a\in I,\, x \leq \uparr_\mathbb{A} a]\big\}.
  \end{align*}
\end{definition}

We now present several properties of dcpo presentations of the shape
$\Delta (\mathbb{A})$.  Let $\eta \colon \mathcal{F}(\mathbb{A}) \to
\overline{\Delta(\mathbb{A})}$ be the natural map $x \mapsto \langle
x\rangle$.

\begin{lemma} \label{Delta:Sup-Presentation} Let $\mathbb{A}$ be a
    lattice. Then $\overline{\Delta(\mathbb{A})} =
    C_\mathbb{A}\operatorname{-Idl} (\Delta(\mathbb{A}))$ and $\eta \colon
    \mathcal{F}(\mathbb{A}) \to \overline{\Delta(\mathbb{A})}$ is a
    $\vee$-homomorphism. Consequently, every $u\in
    \overline{\Delta(\mathbb{A})}$ is a lattice ideal of
    $\mathcal{F}(\mathbb{A})$.
\end{lemma}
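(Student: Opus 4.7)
The proof hinges on the observation that, for any $x, y \in \mathcal{F}(\mathbb{A})$, the pair $\{x, y\}$ constitutes a cover of $x \vee y$ in $\Delta(\mathbb{A})$. Interpreting $x$ and $y$ as filters of $\mathbb{A}$, the condition ``$x \vee y \triangleleft \{x, y\}$'' unfolds to the following: for every ideal $I \subseteq \mathbb{A}$, if $I \cap x \neq \emptyset$ and $I \cap y \neq \emptyset$, then $I \cap (x \vee y) \neq \emptyset$. Given $a \in I \cap x$ and $b \in I \cap y$, the ideal $I$ is closed under binary joins, so $a \vee b \in I$; meanwhile, upward closure of the filters $x$ and $y$ gives $a \vee b \in x \cap y$, which is the underlying filter of $x \vee y$ in $\mathcal{F}(\mathbb{A})$. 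Hence $a \vee b$ witnesses the conclusion. The upper bound required for directedness of $\{x, y\}$ is supplied by $x \vee y$ itself, using Proposition~\ref{Filters:Lattice}. Identifying this cover is the one nontrivial step; the remainder is essentially bookkeeping.

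From this cover, the ``consequently'' clause drops out first. If $X$ is a $C_\mathbb{A}$-ideal and $x, y \in X$, then $\{x, y\} \subseteq X$ and closure under the cover $x \vee y \triangleleft \{x, y\}$ forces $x \vee y \in X$. Thus every $C_\mathbb{A}$-ideal is closed under binary joins in $\mathcal{F}(\mathbb{A})$, and being downward closed, is a lattice ideal. The $\vee$-homomorphism property of $\eta$ follows at once: $\langle x\rangle \vee \langle y\rangle \subseteq \langle x\vee y\rangle$ by downward closure of the right-hand side, and conversely $\langle\{x, y\}\rangle$ is a $C_\mathbb{A}$-ideal containing both $x$ and $y$, so by what we have just proved it contains $x \vee y$, giving $\langle x\vee y\rangle \subseteq \langle x\rangle \vee \langle y\rangle$.

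For the equality $\overline{\Delta(\mathbb{A})} = C_\mathbb{A}\operatorname{-Idl}(\Delta(\mathbb{A}))$, the inclusion $\subseteq$ is immediate from Definition~\ref{dcpo:Overline-P}. For $\supseteq$, given $X \in C_\mathbb{A}\operatorname{-Idl}(\Delta(\mathbb{A}))$, consider the family $\{\langle x\rangle \mid x \in X\}$. Closure of $X$ under binary joins makes this family directed in $C_\mathbb{A}\operatorname{-Idl}(\Delta(\mathbb{A}))$: for $x_1, x_2 \in X$ one has $x_1 \vee x_2 \in X$, and downward closure yields $\langle x_1\rangle, \langle x_2\rangle \subseteq \langle x_1 \vee x_2\rangle$. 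The union of the family equals $X$, itself a $C_\mathbb{A}$-ideal, so the directed join in $C_\mathbb{A}\operatorname{-Idl}(\Delta(\mathbb{A}))$ equals $X$. Since every $\langle x\rangle$ lies in $\overline{\Delta(\mathbb{A})}$ and the latter is closed under directed joins by construction, $X \in \overline{\Delta(\mathbb{A})}$.
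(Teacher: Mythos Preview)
Your proof has a genuine gap at its very first step: the set $\{x,y\}$ is \emph{not} directed unless $x$ and $y$ are comparable in $\mathcal{F}(\mathbb{A})$. Directedness requires every finite subset to have an upper bound \emph{in the set itself}; the element $x\vee y$ that you invoke lies outside $\{x,y\}$ in general. Since the definition of $C_\mathbb{A}$ explicitly demands that the second component be directed, the pair $(x\vee y,\{x,y\})$ is typically not a member of $C_\mathbb{A}$, and your key cover $x\vee y \triangleleft \{x,y\}$ simply fails to exist. Everything downstream---closure of $C_\mathbb{A}$-ideals under binary joins, the $\vee$-homomorphism property of $\eta$, and the directedness of the family $\{\langle x\rangle \mid x\in X\}$ used to obtain the equality $\overline{\Delta(\mathbb{A})}=C_\mathbb{A}\operatorname{-Idl}(\Delta(\mathbb{A}))$---rests on this broken step.

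The paper sidesteps the difficulty by establishing instead a \emph{stability} property of the cover relation: whenever $x\triangleleft U$ (so $U$ is already directed) and $y\in\mathcal{F}(\mathbb{A})$, then $x\vee y\triangleleft\{x'\vee y\mid x'\in U\}$, the new covering set inheriting directedness from $U$. This is precisely the hypothesis of \cite[Proposition~6.2]{JMV2008}, which then delivers both $\overline{\Delta(\mathbb{A})}=C_\mathbb{A}\operatorname{-Idl}(\Delta(\mathbb{A}))$ and the $\vee$-preservation of $\eta$ at once; closure of each $u\in\overline{\Delta(\mathbb{A})}$ under binary joins is then deduced from $\eta$ being a $\vee$-homomorphism rather than the other way round. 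Your more elementary strategy could in principle be salvaged by exhibiting a genuinely directed cover of $x\vee y$ inside an arbitrary $C_\mathbb{A}$-ideal containing $x$ and $y$, but no obvious candidate presents itself, and the stability route is the one that actually works.
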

\begin{proof}
  We will write $\Delta, \mathcal{F}, C$, assuming $\mathbb{A}$ is
  fixed.

  We show the following stability property of $C$: for all
  $y\in\mathcal{F}$ and all $x \triangleleft U$, we have $x\vee y
  \triangleleft U \vee y$ where $U \vee y=\{x'\vee y\mid x'\in
  U\}$. To this end, suppose that $I\in \operatorname{Idl}
  (\mathbb{A})$ such that for all $x' \in U$
 there exists $a'\in I$
  such that $x' \vee y \sqsubseteq \uparr_\mathbb{A} a'$. Since $U$ is non-empty, this
  condition is non-vacuous so that $y \sqsubseteq x' \vee y
  \sqsubseteq \uparr_\mathbb{A} a'$ for some $x' \in U$ and $a' \in I$.  Moreover, since
  $x \triangleleft U$ and $x'\sqsubseteq x' \vee y$ for all $x'\in U$,
  there exists $a\in I$ such that $x \sqsubseteq \uparr_\mathbb{A} a$. But then also
  $x\vee y \sqsubseteq \uparr_\mathbb{A} a \vee \uparr_\mathbb{A} a'=
  \uparr_\mathbb{A}( a \vee a')$ where $a\vee a' \in I$, so that $x\vee y\triangleleft
  U \vee y$.  It now follows by \cite[Proposition 6.2]{JMV2008} that
  $\overline \Delta$ is the suplattice presented by $\Delta$ and that
  $\eta \colon\mathcal{F} \to \overline \Delta$ is a
  $\vee$-homomorphism. It follows by Proposition
  \ref{suplattice:Presentation} that $\overline{\Delta} =
  C\operatorname{-Idl}(\Delta)$.

  Let $u\in \overline{\Delta}$; we will show that $u$ is a
  lattice ideal of $\mathcal{F}$. It follows from
  Definition \ref{dcpo:Definition-Presentation} that $u$ is a
  down-set. Moreover, if $x,y \in u$, then $\eta(x), \eta( y)
  \subseteq u$, so that $\eta(x) \vee \eta(y) \subseteq u$. Since
  $\eta$ is a $\vee$-homomorphism, $\eta(x \vee y) \subseteq u$,
  whence $x\vee y \in u$. It follows that $u$ is a lattice ideal.
\end{proof}

\begin{remark}\label{rem:suplatvsdcpo}
  We would like to highlight that Lemma~\ref{Delta:Sup-Presentation}
  above is a crucial step in allowing the lifting of operators.  The
  canonical extension of a lattice is not just a dcpo completion but a
  suplattice completion of the free dual dcpo completion of the
  lattice. However, there is no equivalent of
  Proposition~\ref{dcpo:Canonicity} for suplattice algebras (see
  \cite[Sec. 4]{JMV2008}). The lemma tells us that
  $\overline{\Delta(\mathbb{A})}$ is in fact also the suplattice
  presented by $\Delta(\mathbb{A})$ as its elements are \emph{all}
  $C_\mathbb{A}$-ideals of $\Delta(\mathbb{A})$. The description of
  this suplattice completion as a dcpo completion is crucial as it
  implies that Proposition~\ref{dcpo:Canonicity} applies. Thus
  Lemma~\ref{Delta:Sup-Presentation} tells us that we \emph{can} lift
  inequations to suplattices with presentations of the shape
  $\Delta(\mathbb{A})$ since they are also dcpo presentations.
\end{remark}

The following Lemma will allow us to show that $\overline{\Delta(\mathbb{A})}$
is in fact the canonical extension of $\mathbb{A}$.
  
\begin{lemma} \label{Delta:Properties}
Let $\eta \colon \mathcal{F}(\mathbb{A}) \to \overline{\Delta(\mathbb{A})}$ be the natural
  map $x \mapsto \langle x\rangle$.
  \begin{enumerate}
  \item For all $x\in \mathcal{F}(\mathbb{A})$, $\eta(x) = \downarr_{\mathcal{F}(\mathbb{A})} x$, hence
    $\eta \colon \mathcal{F}(\mathbb{A}) \to \overline{ \Delta(\mathbb{A})}$ is an embedding.
  \item $\overline{ \Delta(\mathbb{A})}$ is a complete lattice.
  \item $\eta \colon \mathcal{F}(\mathbb{A}) \to \overline{ \Delta(\mathbb{A})}$ is 
  a $\vee,\bigwedge$-homomorphism.
    \item For all directed $T \subseteq A$, 
$\bigvee_{b\in T}
    \langle \uparr_\mathbb{A} b\rangle = \bigcup_{b \in T} \langle \uparr_\mathbb{A} b\rangle$.
 \end{enumerate}
\end{lemma}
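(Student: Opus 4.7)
The plan is to dispatch the four parts in order, since (2)--(4) lean on (1) and on Lemma~\ref{Delta:Sup-Presentation}.

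For (1), I would show that $\downarr_{\mathcal{F}(\mathbb{A})} x$ is already a $C_\mathbb{A}$-ideal; combined with the trivial inclusion $\downarr x \subseteq \langle x\rangle$, this gives $\eta(x) = \downarr x$. Only closure under covers is non-trivial, so suppose $y \triangleleft U$ with $U \subseteq \downarr_{\mathcal{F}(\mathbb{A})} x$. By property~(3) of Definition~\ref{Def:co-Completion} it suffices to show $y \leq \uparr_\mathbb{A} a$ for each $a \in A$ with $x \leq \uparr_\mathbb{A} a$. For such $a$ the principal ideal $I = \downarr_A a$ witnesses the antecedent of the cover clause (take $a' = a$ for every $x' \in U$, since $x' \leq x \leq \uparr_\mathbb{A} a$). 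The cover clause then yields $a'' \in \downarr_A a$ with $y \leq \uparr_\mathbb{A} a''$, and $a'' \leq a$ forces $\uparr_\mathbb{A} a'' \leq \uparr_\mathbb{A} a$. That $\eta$ is an order-embedding follows because $\uparr_\mathbb{A}$ is, and because the principal-downset map $x \mapsto \downarr_{\mathcal{F}(\mathbb{A})} x$ is order-reflecting.

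Parts (2) and (3) are then essentially free. For (2), Lemma~\ref{Delta:Sup-Presentation} identifies $\overline{\Delta(\mathbb{A})}$ with $C_\mathbb{A}\operatorname{-Idl}(\Delta(\mathbb{A}))$, which is closed under arbitrary intersections and so a complete lattice. For (3), Lemma~\ref{Delta:Sup-Presentation} already hands us preservation of finite joins. Preservation of arbitrary meets follows from Part~(1) together with the fact, noted in the preliminaries, that meets in $C_\mathbb{A}\operatorname{-Idl}(\Delta(\mathbb{A}))$ are set-theoretic intersections: indeed $\downarr (\bigwedge_i x_i) = \bigcap_i \downarr x_i$ in $\mathcal{F}(\mathbb{A})$, which is complete by Proposition~\ref{Filters:Lattice}.

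Part~(4) is the key step and is where the definition of $C_\mathbb{A}$ really enters. The inclusion $\supseteq$ is automatic. For $\subseteq$, using Part~(1) it suffices to show that $\bigcup_{b\in T} \downarr_{\mathcal{F}(\mathbb{A})} \uparr_\mathbb{A} b$ is itself a $C_\mathbb{A}$-ideal; it will then contain each $\langle \uparr_\mathbb{A} b\rangle$ and hence contain their $C$-ideal join. Downward closure is immediate. For cover closure, given $y \triangleleft U$ with each $x' \in U$ bounded by some $\uparr_\mathbb{A} b(x')$ with $b(x') \in T$, I would take $I = \downarr_A T$. This is a lattice ideal of $\mathbb{A}$ precisely because $T$ is directed; it satisfies the antecedent of the cover clause via $a' = b(x')$, and so yields $a \in I$ with $y \leq \uparr_\mathbb{A} a$. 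Choosing $b \in T$ with $a \leq b$ then gives $y \leq \uparr_\mathbb{A} a \leq \uparr_\mathbb{A} b$, placing $y$ in the union.

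The main obstacle I anticipate is the bookkeeping around the two orders in play: the order on $\mathbb{A}$ versus the reverse-inclusion order on filters inherited by $\mathcal{F}(\mathbb{A})$. The decisive observation in Part~(4) is that directedness of $T$ in $A$ is exactly what makes $\downarr_A T$ an ideal of $\mathbb{A}$, and hence an eligible witness in the quantifier over $\operatorname{Idl}(\mathbb{A})$ defining $C_\mathbb{A}$.
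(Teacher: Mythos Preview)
Your proposal is correct and follows essentially the same route as the paper's proof: in each part you make the same key moves---showing $\downarr_{\mathcal{F}(\mathbb{A})} x$ is a $C$-ideal via the principal ideal $\downarr_A a$ in (1), invoking Lemma~\ref{Delta:Sup-Presentation} for (2) and the $\vee$-part of (3), computing meets as intersections of principal downsets for the $\bigwedge$-part of (3), and showing $\bigcup_{b\in T}\downarr(\uparr_\mathbb{A} b)$ is a $C$-ideal using $I=\downarr_A T$ in (4). The paper's argument differs only in presentation, not in substance.
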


\begin{proof}
  We will write $\Delta, \mathcal{F}, C$, assuming $\mathbb{A}$ is fixed.

  (1) 
  We will show that $\downarr_{\mathcal{F}(\mathbb{A})} x$ is a $C$-ideal, which is sufficient since 
  necessarily $\downarr_{\mathcal{F}(\mathbb{A})} x\subseteq \langle x\rangle$. Suppose that 
  $y \triangleleft U$ and $U
  \subseteq\downarr_{\mathcal{F}(\mathbb{A})} x$. If $a \in A$ such
  that $x \leq \uparr_\mathbb{A} a $
  then $\downarr_{\mathbb A} a$ is an ideal of $\mathbb A$ 
  and for each $x'\in U$, $x' 
 \leq x
 \leq \uparr_\mathbb{A} a$, 
  so by the definition of $C$, there is $a' \in \downarr_\mathbb{A} a$ with $y
 \leq \uparr_\mathbb{A} a'$. 
  That is, $x \leq \uparr_\mathbb{A} a$ implies $y
 \leq \uparr_\mathbb{A} a$ and thus
  \[ 
  y
 \leq \bigwedge \{ \uparr_\mathbb{A} a \mid x \leq \uparr_\mathbb{A} a\} = x
  \] 
  and  $\downarr_{\mathcal{F}(\mathbb{A})} x$ is a $C$-ideal.
  
  (2) It follows from Lemma \ref{Delta:Sup-Presentation} that
  $\overline{\Delta}$ is complete lattice.

  (3) It follows from Lemma \ref{Delta:Sup-Presentation} that $\eta$ is a
  $\vee$-homomorphism.  Let $S \subseteq \mathcal{F}$; we will show that $\bigwedge_{x\in S} \langle
  x\rangle = \langle \bigwedge S \rangle$. This follows immediately
  from the fact that $C\operatorname{-Idl}
  (\Delta)$ is a closure system and (1) above:
  \[ \bigwedge_{x \in S} \langle x\rangle= \bigcap_{x \in S} \langle x\rangle = \bigcap_{x \in S}
  \downarr_{\mathcal{F}(\mathbb{A})} x = \downarr_{\mathcal{F}(\mathbb{A})} ({\textstyle\bigwedge} S) = \langle
  {\textstyle\bigwedge} S \rangle.\] 

  (4) Since $\bigcup_{b \in T} \langle \uparr_\mathbb{A} b\rangle \subseteq \big\langle
  \bigcup_{b \in T} \langle \uparr_\mathbb{A} b\rangle \big\rangle = \bigvee_{b\in T}
  \langle \uparr_\mathbb{A} b\rangle$, it suffices to show that $\bigcup_{b \in T}
  \langle \uparr_\mathbb{A} b\rangle$ is a $C$-ideal. Let $I:= \downarr_\mathbb{A} T$.  Now
  suppose that $x \triangleleft U$ and $U \subseteq \bigcup_{b \in T}
  \langle \uparr_\mathbb{A} b\rangle = \bigcup_{b \in T} \downarr_{\mathcal{F}(\mathbb{A})} (\uparr_\mathbb{A} b)$; then for each
  $x' \in U$, there is a $b'\in I$ such that $x' 
 \leq \uparr_\mathbb{A} b'$. Since $x
  \triangleleft U$, it follows that there is some $b\in I$ such that
  $x
 \leq \uparr_\mathbb{A} b$; since $I=\downarr_\mathbb{A} T$, we may assume that $b\in
  T$. But then $x \in \bigcup_{b \in T} \langle \uparr_\mathbb{A} b\rangle$; it follows
  that $\bigcup_{b \in T} \langle \uparr_\mathbb{A} b\rangle$ is a $C$-ideal. 
    \end{proof}

\begin{remark}
  Analogous to the $\bigwedge,\vee$-homomorphism $\eta\colon
  \mathcal{F}(\mathbb{A}) \to \overline{\Delta(\mathbb{A})}$ we could
  also define a $\bigvee, \wedge$-homomorphism $\mu \colon
  \mathcal{I}(\mathbb{A}) \to \overline{\Delta(\mathbb{A})}$, where
  $\mathcal{I}(\mathbb{A})$ is the directed join-completion (or the
  ideal completion) of $\mathbb{A}$. We would then use the map $\mu
  \colon y \mapsto \bigvee_{b\in y}\langle b\rangle$.
\end{remark}

Let $e\colon \mathbb{A} \to \overline{ \Delta(\mathbb{A})}$ be the
restriction of $\eta\colon \mathcal{F}(\mathbb{A}) \to \overline{
  \Delta(\mathbb{A})}$ to $A$, i.e.\[e\colon a \mapsto \langle
\uparr_\mathbb{A} a \rangle = \downarr_{\mathcal{F}(\mathbb{A})}
\left( \uparr_\mathbb{A} a \right).\]

\begin{theorem}\label{Delta:Can-Ext}
  Let $\mathbb{A}$ be a lattice. Then the embedding $e\colon
  \mathbb{A} \to \overline{ \Delta(\mathbb{A})}$ is the canonical
  extension of $\mathbb{A}$.
\end{theorem}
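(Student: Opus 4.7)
The plan is to verify the three requirements for $e\colon \mathbb{A} \to \overline{\Delta(\mathbb{A})}$ to be the canonical extension of $\mathbb{A}$: that it is a lattice embedding into a complete lattice, together with density and compactness. The first of these is immediate from the machinery already set up: $\overline{\Delta(\mathbb{A})}$ is a complete lattice by Lemma~\ref{Delta:Properties}(2), and $e = \eta \circ \uparr_\mathbb{A}$ is a lattice embedding because $\uparr_\mathbb{A}$ is one by Proposition~\ref{Filters:Lattice} while $\eta$ is a $\vee,\bigwedge$-homomorphism and an order-embedding by Lemma~\ref{Delta:Properties}(1) and (3).

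Compactness I would settle by a direct computation. Writing $x_F := \bigwedge \uparr_\mathbb{A}[F]$, Lemma~\ref{Delta:Properties}(3) gives $\bigwedge e[F] = \eta(x_F) = \downarr_{\mathcal{F}(\mathbb{A})} x_F$, while Lemma~\ref{Delta:Properties}(4) gives $\bigvee e[I] = \bigcup_{a \in I} \downarr_{\mathcal{F}(\mathbb{A})} \uparr_\mathbb{A} a$. The assumption $\bigwedge e[F] \leq \bigvee e[I]$, tested on $x_F$ itself, then yields some $a\in I$ with $x_F \leq \uparr_\mathbb{A} a$. Since $\uparr_\mathbb{A}[F]$ is co-directed (as $F$ is a filter of $\mathbb{A}$), Definition~\ref{Def:co-Completion}(4) produces $b\in F$ with $\uparr_\mathbb{A} b \leq \uparr_\mathbb{A} a$, i.e.~$b\leq a$.

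For density, given $u \nleq v$ I would pick $x \in u \setminus v$. The filter side mirrors the compactness calculation: take $F := \{a \in A : x \leq \uparr_\mathbb{A} a\}$, which is a filter by Definition~\ref{Def:co-Completion}(3) and satisfies $x = \bigwedge \uparr_\mathbb{A}[F]$, so $\bigwedge e[F] = \downarr_{\mathcal{F}(\mathbb{A})} x$, which sits below $u$ but not below $v$. The hard part is the ideal side, and this is where the definition of $C_\mathbb{A}$ is crucial. By Lemma~\ref{Delta:Sup-Presentation}, $v$ is a lattice ideal of $\mathcal{F}(\mathbb{A})$, hence directed; closure of $v$ under covers together with $v \subseteq v$ forces $x \ntriangleleft v$, for otherwise $x$ would lie in $v$. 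Unpacking Definition~\ref{Delta:Definition} then delivers exactly an ideal $I \subseteq A$ such that every $y \in v$ lies below some $\uparr_\mathbb{A} a$ with $a \in I$, while $x \nleq \uparr_\mathbb{A} a$ for every $a \in I$; via Lemma~\ref{Delta:Properties}(4) this translates into $v \leq \bigvee e[I]$ and $u \nleq \bigvee e[I]$.

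The elegance, and the main obstacle, is precisely this ideal side of density: $C_\mathbb{A}$ has been engineered so that its failure at a point $x$ with respect to a directed set $v$ produces the separating ideal of $\mathbb{A}$ required for density, and the proof then reduces to observing that $C$-ideal-hood of $v$ forces this failure whenever $x \notin v$. The few degenerate situations (such as $v$ or $F$ being empty when $\mathbb{A}$ lacks bounds) are handled by choosing trivial witnesses and do not alter the shape of the argument.
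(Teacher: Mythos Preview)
Your proof is correct and follows essentially the same route as the paper: the same computations of $\bigwedge e[F]$ and $\bigvee e[I]$ via Lemma~\ref{Delta:Properties}, the same use of Definition~\ref{Def:co-Completion}(4) for compactness, and the same key observation for density that $v$ being a directed $C$-ideal together with $x\notin v$ forces $x\ntriangleleft v$, whence the definition of $C_\mathbb{A}$ supplies the separating ideal. The only cosmetic difference is that you justify the lattice-embedding step slightly more explicitly and order the argument as embedding--compactness--density rather than embedding--density--compactness.
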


\begin{proof}
  We will write $\Delta, \mathcal{F}, C$ as before. First, observe
  that it follows from Proposition \ref{Filters:Lattice} and Lemma
  \ref{Delta:Properties}.1 that $e\colon \mathbb{A} \to \overline
  \Delta$ is an embedding.

  Next, in order to prove that the embedding is dense, assume that
  $u,v \in \overline \Delta$ such that $u \nsubseteq v$. We will show
  that there are a filter $F$ and an ideal $I$ of $\mathbb{A}$ such
  that $\bigwedge e[F] \subseteq u$, $\bigwedge e[F] \nsubseteq v$, $u
  \nsubseteq\bigvee e[I]$ and $v\subseteq \bigvee e[I]$. It follows
  from $u\nsubseteq v$ that there is some $x\in u \setminus v$, so
  that $\langle x\rangle \subseteq u$ and $\langle x\rangle \nsubseteq
  v$. Take $F:= \{a \in A \mid x \leq \uparr_\mathbb{A} a\}$, then $\langle x\rangle =
  \bigwedge e[F]$ and we have our first witness; we will use this same
  element $x\in u\setminus v$ to find a suitable ideal $I$. Now
  observe that $v$ is a directed subset of $\mathcal{F}$ by Lemma
  \ref{Delta:Sup-Presentation}. If it were the case that
  $x\triangleleft v$, then since $v$ is a $C$-ideal and $v\subseteq
  v$, it would follow that $x\in v$, contrary to our assumption. So it
  must be the case that $x \ntriangleleft v$ and thus, by the
  definition of the covering relation, there must be some ideal
  $I\subseteq A$ such that
  \begin{equation} \label{IdealWitness} \forall \,x'\in v, \exists\,
    a'\in I \text{ such that }x' 
    \leq \uparr_\mathbb{A} a', \text{ but }
    \forall\, a\in I, \, x\not
   \leq \uparr_\mathbb{A} a.
  \end{equation}
  We claim that $u \nsubseteq\bigvee e[I]$ and $v\subseteq \bigvee
  e[I]$. If the former were the case, then we would find that \[x\in u
  \subseteq \bigvee e[I] = \bigvee_{a\in I} \langle \uparr_\mathbb{A} a\rangle =
  \bigcup_{a\in I} \langle \uparr_\mathbb{A} a\rangle = \bigcup_{a\in I} \downarr_{\mathcal{F}(\mathbb{A})}
  (\uparr_\mathbb{A} a),\] where the last two equalities follow from
  Lemma~\ref{Delta:Properties}. It now follows that $x
 \leq \uparr_\mathbb{A} a$ for
  some $a\in I$, contradicting \eqref{IdealWitness}. Finally, given
  $x' \in v$ and $a' \in I$ such that $x'
 \leq
  \uparr_\mathbb{A} a'$, we find that $\langle x' \rangle \subseteq \langle \uparr_\mathbb{A} a'\rangle$, so that
  it follows from \eqref{IdealWitness} that
  \[ v= \bigvee \{ \langle x'\rangle \mid x'\in v\} \subseteq \bigvee \{
  \langle \uparr_\mathbb{A} a' \rangle \mid a' \in I\} = \bigvee e[I].\]

  Finally, for the compactness property, suppose that $F$ and $I$ are an 
  arbitrary filter and ideal of $\mathbb{A}$ such that 
  $\bigwedge e[F] \subseteq \bigvee e[I]$;
  we must show that there exists $a\in I$ and $b\in F$ such that $b
 \leq a$. By Lemma \ref{Delta:Properties}.3, $\bigwedge
  e[F]=\langle \bigwedge F \rangle$, so we find that \[ \bigwedge F
  \in \langle \bigwedge F \rangle =\bigwedge e[F] \subseteq \bigvee
  e[I] = \bigcup_{a\in I} \downarr_{\mathcal{F}(\mathbb{A})} (\uparr_\mathbb{A} a), \] where the second equality
  follows from Lemma~\ref{Delta:Properties}.4 as before. It follows that
  $\bigwedge F \in \downarr_{\mathcal{F}(\mathbb{A})} (\uparr_\mathbb{A} a)$ for some $a\in I$, so by Definition
  \ref{Def:co-Completion}.4, there is some $b\in F$ such
  that $b
 \leq a $. 
\end{proof}

Recall that if $\mathbb{A}$ is a lattice and $e\colon A \to
\mathbb{A}^\sigma$ is its canonical extension, the \emph{closed
  elements} of $\mathbb{A}^\sigma$ are defined as
\[ K(\mathbb{A}^\sigma) := \{{\textstyle \bigwedge} e[F] \mid F
\subseteq A, F \text{ a filter}\}.\] If we view $\overline{\Delta(\mathbb{A})}$ as the
canonical extension of $\mathbb{A}$, then the closed elements
correspond to the elements of $\mathcal{F}(\mathbb{A})$:
\[ K(\overline{\Delta(\mathbb{A})}) = \{ \langle x\rangle \mid x\in
\mathcal{F}(\mathbb{A}) \}.\] This follows from the fact that for each
$x\in \mathcal{F}(\mathbb{A})$, $\{ a\in A \mid x \leq
\uparr_\mathbb{A} a \}$ is a filter and we have $x = \bigwedge \{
\uparr_\mathbb{A} a \mid x \leq
\uparr_\mathbb{A} a \}$, and the fact that $\eta \colon \mathcal{F}(\mathbb{A}) \to
\overline{ \Delta(\mathbb{A})}$ preserves all meets by Lemma
\ref{Delta:Properties}.3.

\begin{lemma}\label{Delta:Operator-Cover-Stable}
  Let $\mathbb{A}_1,\dotsc, \mathbb{A}_n, \mathbb{B}$ be lattices and
  let $f\colon \mathbb{A}_1 \times \dotsb \times \mathbb{A}_n \to
  \mathbb{B}$ be an operator.  Then $f^\mathcal{F} \colon
  \mathcal{F}(\mathbb{A}_1)\times \dotsb \times \mathcal{F}
  (\mathbb{A}_n) \to \mathcal{F} (\mathbb{B})$ is cover-stable.
\end{lemma}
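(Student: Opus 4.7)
The plan is to work directly from the definition of cover-stability. Fix a tuple $(x_1, \ldots, x_n) \in \mathcal{F}(\mathbb{A}_1) \times \cdots \times \mathcal{F}(\mathbb{A}_n)$, an index $i$, and a cover $x_i \triangleleft_i U_i$ in $C_{\mathbb{A}_i}$, and set
\[V := \{f^\mathcal{F}(x_1, \ldots, x_{i-1}, y, x_{i+1}, \ldots, x_n) \mid y \in U_i\}.\]
The set $V$ is non-empty and directed in $\mathcal{F}(\mathbb{B})$, inheriting both properties from $U_i$ via the monotonicity of $f^\mathcal{F}$. For the substantive clause of $\triangleleft$, I would then fix an arbitrary ideal $J \in \operatorname{Idl}(\mathbb{B})$ such that every $v \in V$ satisfies $v \leq \uparr_\mathbb{B} c$ for some $c \in J$, and aim to produce a single $c^* \in J$ witnessing $f^\mathcal{F}(x_1, \ldots, x_n) \leq \uparr_\mathbb{B} c^*$.

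A key preliminary step is a reformulation of inequalities of the form $f^\mathcal{F}(x_1, \ldots, x_n) \leq \uparr_\mathbb{B} c$. Each set $\{a_j \in A_j \mid x_j \leq \uparr_{\mathbb{A}_j} a_j\}$ is a filter of $\mathbb{A}_j$ since $\uparr_{\mathbb{A}_j}$ preserves finite meets by Proposition~\ref{Filters:Lattice}, so the index set in the defining meet of $f^\mathcal{F}(x_1, \ldots, x_n)$ is codirected. Clause~(4) of Definition~\ref{Def:co-Completion} then shows that $f^\mathcal{F}(x_1, \ldots, x_n) \leq \uparr_\mathbb{B} c$ if and only if there exist $a_1, \ldots, a_n$ with $x_j \leq \uparr_{\mathbb{A}_j} a_j$ for each $j$ and $f(a_1, \ldots, a_n) \leq c$. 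Applying this reformulation to the hypothesis, for each $y \in U_i$ I extract $c_y \in J$ and $(a_j^y)_j$ with $y \leq \uparr_{\mathbb{A}_i} a_i^y$, $x_j \leq \uparr_{\mathbb{A}_j} a_j^y$ for $j \neq i$, and $f(a_1^y, \ldots, a_n^y) \leq c_y$.

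The next step packages these witnesses into an ideal of $\mathbb{A}_i$:
\[I_i := \{a \in A_i \mid \exists c \in J, \exists (a_j)_{j \neq i},\; x_j \leq \uparr_{\mathbb{A}_j} a_j \text{ for } j \neq i \text{ and } f(a_1, \ldots, a_{i-1}, a, a_{i+1}, \ldots, a_n) \leq c\}.\]
Downward closure of $I_i$ is immediate from the monotonicity of $f$. The main obstacle is closure under binary joins, which is precisely where the operator hypothesis on $f$ enters: given witnesses $(c, (a_j))$ and $(c', (a_j'))$ for $a, a' \in I_i$, I would replace the tuples by the coordinatewise meets $b_j := a_j \wedge a_j'$ for $j \neq i$ (which still witness $x_j \leq \uparr_{\mathbb{A}_j} b_j$ because $\uparr_{\mathbb{A}_j}$ preserves finite meets), and then the operator law in the $i$th coordinate splits $f(\ldots, a \vee a', \ldots)$ into $f(\ldots, a, \ldots) \vee f(\ldots, a', \ldots)$, which is bounded by $c \vee c' \in J$ using monotonicity of $f$ and closure of $J$ under binary joins.

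To finish, each $a_i^y$ belongs to $I_i$ with $y \leq \uparr_{\mathbb{A}_i} a_i^y$, so the definition of $x_i \triangleleft_i U_i$ furnishes some $a^* \in I_i$ with $x_i \leq \uparr_{\mathbb{A}_i} a^*$. Unpacking the membership $a^* \in I_i$ yields $c^* \in J$ and $(a_j^*)_{j \neq i}$ such that $x_j \leq \uparr_{\mathbb{A}_j} a_j^*$ for $j \neq i$ and $f(a_1^*, \ldots, a_{i-1}^*, a^*, a_{i+1}^*, \ldots, a_n^*) \leq c^*$. Setting $a_i^* := a^*$ and invoking the reformulation from the second paragraph once more gives $f^\mathcal{F}(x_1, \ldots, x_n) \leq \uparr_\mathbb{B} c^*$, as required.
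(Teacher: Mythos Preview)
Your proof is correct and follows the same broad outline as the paper's: fix an ideal $J$ of $\mathbb{B}$, push the hypothesis down from $\mathcal{F}$ to the underlying lattices via clause~(4) of Definition~\ref{Def:co-Completion}, build an ideal of $\mathbb{A}_i$ out of the resulting witnesses, invoke $x_i \triangleleft_i U_i$ against that ideal, and use the operator law to control the join that appears. The organisation differs in two respects. First, you push \emph{all} coordinates down to the $A_j$ at once and then use the operator property of $f$ itself; the paper only pushes the $i$th coordinate down (leaving the others in $\mathcal{F}(\mathbb{A}_j)$) and instead invokes Proposition~\ref{Filter:Operator} to use the operator property of $f^\mathcal{F}$. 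Second, the paper takes the ideal generated by $\{b_y \mid y\in U_i\}$ and handles joins at the end via a finite-join argument, whereas you define $I_i$ directly by a closure condition so that the operator law is absorbed into the verification that $I_i$ is an ideal. Your route is slightly more self-contained in that it bypasses Proposition~\ref{Filter:Operator}; the paper's route is a bit more modular, reusing that proposition as a black box.
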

\begin{proof}
  We write $x_i \triangleleft_i U_i$ if $(x_i, U_i) \in C_{\mathbb{A}_i}$ and
  $x\triangleleft U$ if $(x,U) \in C_\mathbb{B}$. Let $1 \leq i \leq
  n$, $(x_1, \dotsc, x_n) \in \mathcal{F}(\mathbb{A}_1) \times \dotsb
  \times \mathcal{F}(\mathbb{A}_n)$ and $U_i\subseteq
  \mathcal{F}(\mathbb{A}_i)$ such that $x_i \triangleleft_i U_i$. We
  need to show that
  \begin{equation}\label{f:Stable}
    f^\mathcal{F} ( x_1,
    \dotsc, x_n) \triangleleft \{ f^\mathcal{F} ( x_1, \dotsc, x_{i-1}, y,
    x_{i+1}, \dotsc, x_n) \mid y \in U_i\}.
  \end{equation}
  We will write $f^\mathcal{F} (-,y,-)$ for an element of the right
  hand side set above.  Let $I \in \operatorname{Idl}(\mathbb{B})$
  such that for every $y\in U_i$, there is some $a_y\in I$ such that
  $f^\mathcal{F} (-,y,-) 
  \leq \uparr_{\mathbb{B}} a_y$. We need to find some $c\in I$ such that $f^\mathcal{F}(-,x_i,-) 
  \leq \uparr_{\mathbb{B}} c$. Now since
  $f^\mathcal{F}$ is co-Scott continuous, it is also co-Scott
  continuous in its $i$th coordinate \cite[Lemma 3.2.6]{AJ1994}.  Thus if
  we take $y\in U_i$ and write $y= \bigwedge \{
\uparr_{\mathbb{A}_i} b \mid y \leq
\uparr_{\mathbb{A}_i} b \}$,
  then
  \begin{align*} 
    f^\mathcal{F} (-,y,-) &= f^\mathcal{F} (-,{\textstyle \bigwedge} \{
    \uparr_{\mathbb{A}_i} b \mid y \leq
    \uparr_{\mathbb{A}_i} b \},-) \\ & = \bigwedge_{b\in A_i, y \leq \uparr_{\mathbb{A}_i} b}
    f^\mathcal{F} (-,\uparr_{\mathbb{A}_i} b,-)  \leq
    \uparr_{\mathbb{B}} a_y.
  \end{align*}
  It follows by Definition
  \ref{Def:co-Completion}.4 that there is some $b_y\in A_i $
  such that $y \leq \uparr_{\mathbb{A}_i} b_y$ and $f^\mathcal{F} (-,y,-)  \leq f^\mathcal{F} (-,\uparr_{\mathbb{A}_i} b_y,-)\leq \uparr_{\mathbb{B}} a_y$. Let 
  $I' \in \operatorname{Idl} (\mathbb{A}_i)$ be
  the ideal generated by $\{b_y \mid y\in U_i\}$. Since 
  $y \leq \uparr_{\mathbb{A}_i} b_y \in I'$ for each $y \in U_i$ and $x_i \triangleleft_i U_i$, 
  it follows that there is some $b\in I'$ such that $x_i \leq \uparr_{\mathbb{A}_i} b$. 
  By definition of $I'$, there exist $y_1, \dotsc, y_k \in U$ such that $x_i
  \leq \uparr_{\mathbb{A}_i} b \leq \uparr_{\mathbb{A}_i} b_{y_1} \vee \dotsb \vee \uparr_{\mathbb{A}_i} b_{y_k}$. But
  then
  \begin{align*} 
  f^\mathcal{F}(-,x_i,- ) 
   & \leq
    f^\mathcal{F}(-,\uparr_{\mathbb{A}_i} b,-) \\
       & \leq f^\mathcal{F}(-,\uparr_{\mathbb{A}_i} b_{y_1} \vee
    \dotsb \vee \uparr_{\mathbb{A}_i} b_{y_k},-) \\
    & = f^\mathcal{F}(-,\uparr_{\mathbb{A}_i} b_{y_1},-) \vee
    \dotsb \vee f^\mathcal{F}(-,\uparr_{\mathbb{A}_i} b_{y_k},-) \\
      & \leq \uparr_{\mathbb{B}} a_{y_1}
    \vee \dotsb \vee \uparr_{\mathbb{B}} a_{y_k} = \uparr_{\mathbb{B}}
    (a_{y_1}
  \vee \dotsb \vee a_{y_k}), 
  \end{align*} 
  where the first equality follows from the fact that $f^\mathcal{F}$ is an
  operator (by Proposition \ref{Filter:Operator}). Since $a_{y_1}
  \vee \dotsb \vee a_{y_k} \in I$ and $I$ was
  arbitrary, it follows that \eqref{f:Stable} holds.
\end{proof}

\begin{corollary}
  Let $\mathbb{A}_1, \dotsc, \mathbb{A}_n$ and $\mathbb{B}$ be
  lattices and let $f\colon \mathbb{A}_1 \times \dotsb \times
  \mathbb{A}_n \to \mathbb{B}$ be an operator. Then
  $\overline{f^\mathcal{F}}\colon \overline{\Delta(\mathbb{A}_1)} \times \dotsb
  \times \overline{\Delta(\mathbb{A}_n)} \to \overline{\Delta(\mathbb{B})}$ 
  is well-defined and Scott-continuous. Moreover, $\overline{f^\mathcal{F}} =
  f^\sigma$.
\end{corollary}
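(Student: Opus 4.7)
The plan is to obtain the well-definedness and Scott-continuity of $\overline{f^\mathcal{F}}$ directly from the general machinery of dcpo presentations, and then to identify $\overline{f^\mathcal{F}}$ with $f^\sigma$ by computing both maps on closed tuples and then propagating the identity via directed joins.

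For the first claim I would note that $f^\mathcal{F}$ is order-preserving (being the co-Scott continuous extension of an order-preserving operator) and, by Lemma~\ref{Delta:Operator-Cover-Stable}, cover-stable with respect to the presentations $\Delta(\mathbb{A}_i)$ and $\Delta(\mathbb{B})$. Proposition~\ref{dcpo:coverstable} then delivers that $\overline{f^\mathcal{F}}$ is a well-defined and (uniquely) Scott-continuous extension of $f^\mathcal{F}$ across the $\eta$-maps.

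For $\overline{f^\mathcal{F}} = f^\sigma$, the core step is agreement on closed tuples. By the explicit formula in Proposition~\ref{dcpo:coverstable},
\[
\overline{f^\mathcal{F}}(\langle x_1\rangle, \dotsc, \langle x_n\rangle) = \bigl\langle \{ f^\mathcal{F}(z_1, \dotsc, z_n) \mid z_i \leq x_i\}\bigr\rangle,
\]
and since $f^\mathcal{F}$ is order-preserving this collapses to $\langle f^\mathcal{F}(x_1, \dotsc, x_n)\rangle$. Writing each $x_i = \bigwedge\{\uparr_{\mathbb{A}_i} a \mid x_i \leq \uparr_{\mathbb{A}_i} a\}$ and using that $f^\mathcal{F}$ is co-Scott continuous in each argument, I expand this as $\bigwedge\{\uparr_\mathbb{B} f(a_1, \dotsc, a_n) \mid x_i \leq \uparr_{\mathbb{A}_i} a_i\ \forall i\}$. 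Applying $\eta$ and using the meet-preservation clause Lemma~\ref{Delta:Properties}.3 gives
\[
\overline{f^\mathcal{F}}(\langle x_1\rangle, \dotsc, \langle x_n\rangle) = \bigwedge\{e_\mathbb{B}(f(a_1, \dotsc, a_n)) \mid \langle x_i\rangle \leq e_{\mathbb{A}_i}(a_i)\ \forall i\},
\]
which is exactly the definition of $f^\sigma$ on the closed tuple $(\langle x_1\rangle, \dotsc, \langle x_n\rangle)$ under the identification $K(\overline{\Delta(\mathbb{A})}) = \{\langle x\rangle \mid x\in \mathcal{F}(\mathbb{A})\}$ established after Theorem~\ref{Delta:Can-Ext}.

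For an arbitrary tuple $(u_1, \dotsc, u_n)$, I would use that each $u_i$ is a lattice ideal of $\mathcal{F}(\mathbb{A}_i)$ by Lemma~\ref{Delta:Sup-Presentation} and so, combined with Lemma~\ref{Delta:Properties}.1, equals the directed union $\bigcup_{x_i\in u_i}\langle x_i\rangle$ in $\overline{\Delta(\mathbb{A}_i)}$. Scott-continuity of $\overline{f^\mathcal{F}}$ in each coordinate then gives
\[
\overline{f^\mathcal{F}}(u_1, \dotsc, u_n) = \bigvee\bigl\{\overline{f^\mathcal{F}}(\langle x_1\rangle, \dotsc, \langle x_n\rangle) \mid x_i \in u_i\bigr\},
\]
which by the previous paragraph equals $\bigvee\{f^\sigma(\langle x_1\rangle, \dotsc, \langle x_n\rangle) \mid x_i \in u_i\}$, and this is precisely $f^\sigma(u_1, \dotsc, u_n)$ by its defining formula on arbitrary tuples. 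The main obstacle will be the bookkeeping around the multivariable co-Scott continuous extension, in particular justifying the formula $f^\mathcal{F}(x_1, \dotsc, x_n) = \bigwedge\{\uparr_\mathbb{B} f(a_1, \dotsc, a_n) \mid x_i \leq \uparr_{\mathbb{A}_i} a_i\}$ and peeling the $\bigwedge$'s one variable at a time; this should follow from the product decomposition $\mathcal{F}(\mathbb{A}_1 \times \dotsb \times \mathbb{A}_n) \cong \mathcal{F}(\mathbb{A}_1) \times \dotsb \times \mathcal{F}(\mathbb{A}_n)$ together with coordinatewise co-Scott continuity of $f^\mathcal{F}$.
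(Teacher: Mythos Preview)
Your proposal is correct and follows essentially the same route as the paper: cover-stability of $f^\mathcal{F}$ via Lemma~\ref{Delta:Operator-Cover-Stable} gives well-definedness and Scott-continuity through Proposition~\ref{dcpo:coverstable}; agreement on closed tuples is obtained by unwinding $\langle f^\mathcal{F}(x_1,\dotsc,x_n)\rangle$ using co-Scott continuity of $f^\mathcal{F}$ together with meet-preservation of $\eta$ (Lemma~\ref{Delta:Properties}.3); and the extension to arbitrary tuples uses that each $u_i$ is a directed set of closed elements (Lemma~\ref{Delta:Sup-Presentation}) plus Scott-continuity. The only cosmetic difference is that the paper cites \cite[Lemma~3.3]{JMV2008} for $\overline{f^\mathcal{F}}(\langle x_1\rangle,\dotsc,\langle x_n\rangle)=\langle f^\mathcal{F}(x_1,\dotsc,x_n)\rangle$, whereas you derive it directly from the explicit formula in Proposition~\ref{dcpo:coverstable} together with $\langle x_i\rangle=\downarr_{\mathcal{F}(\mathbb{A}_i)} x_i$.
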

\begin{proof}
  Let $f\colon \mathbb{A}_1 \times \dotsb \times \mathbb{A}_n \to
  \mathbb{B}$ be as in the assumptions above.  It follows from
  Proposition \ref{Filter:Operator} 
 and Lemma
  \ref{Delta:Operator-Cover-Stable} that $\overline{f^\mathcal{F}}$ is
  well-defined and Scott-continuous. To show that
  $\overline{f^\mathcal{F}} = f^\sigma$; 
  observe that
  $\overline{f^\mathcal{F}}$ and $f^\sigma$ agree on closed elements:
  \[ \overline{f^\mathcal{F}}(\langle x_1\rangle, \dotsc, \langle x_n \rangle)=
  \langle f^\mathcal{F}(x_1,\dotsc,x_n) \rangle,\] by \cite[Lemma
  3.3]{JMV2008}. 
  Since $x_i = \bigwedge \{\uparr_{\mathbb{A}_i} b \mid x_i \leq \uparr_{\mathbb{A}_i} b $ for
  all $1\leq i \leq n$, we find that
  \begin{gather*} \langle f^\mathcal{F}(x_1,\dotsc,x_n) \rangle=\\
    \langle f^\mathcal{F}({\textstyle \bigwedge} \{
    \uparr_{\mathbb{A}_1} a_1 \mid x_1 \leq \uparr_{\mathbb{A}_1}
    a_1\},\dotsc, {\textstyle \bigwedge} \{
    \uparr_{\mathbb{A}_1} a_n \mid x_n \leq \uparr_{\mathbb{A}_n} a_n\})
    \rangle= \\ \bigwedge \big\{ \langle \uparr_{\mathbb{B}} f(a_1, \dotsc, a_n) \rangle
    \mid (x_1, \dotsc, x_n) \sqsubseteq (\uparr_{\mathbb{A}_1} a_1, \dotsc, \uparr_{\mathbb{A}_n} a_n) \big\} =
    f^\sigma(\langle x_1\rangle, \dotsc, \langle
    x_n\rangle),\end{gather*} where the second equality follows from
  the fact that both $f^\mathcal{F}$ and $\langle \cdot \rangle$
  commute with co-directed meets.

  Secondly, recall from Lemma \ref{Delta:Sup-Presentation} that every
  $u \in \overline{\Delta(\mathbb{A}_i)}$, seen as a $C$-ideal, is a
  directed subset of $\mathcal{F}(\mathbb{A})$. Thus, $u=
  \bigvee_{x\in u} \langle x \rangle$ is a directed join. Since we
  showed above that $\overline{f^\mathcal{F}}$ is Scott-continuous, it
  follows that
  \begin{gather*}
    \overline{f^\mathcal{F}}(u_1, \dotsc, u_n)=
    \overline{f^\mathcal{F}} \big( {\textstyle\bigvee_{x_1\in u_1}}
    \langle x_1 \rangle, \dotsc,{\textstyle\bigvee_{x_n\in u_n}}
    \langle x_n\rangle\big) =\\ 
    \bigvee \big\{ \overline{f^\mathcal{F}} ( \langle x_1 \rangle,
    \dotsc, \langle x_n\rangle) \mid x_i \in u_i \text{ for all } 1
    \leq i \leq n \big\} =\\ 
    \bigvee \big\{ f^\sigma ( \langle x_1 \rangle, \dotsc, \langle
    x_n\rangle) \mid x_i \in u_i \text{ for all } 1 \leq i \leq n
    \big\} = f^\sigma (u_1,\dotsc, u_n),
  \end{gather*}
  for arbitrary $(u_1,\dotsc, u_n) \in \overline{\Delta(\mathbb{A}_1)}
  \times \dotsb \times \overline{\Delta(\mathbb{A}_n)}$.
\end{proof}
Thus, we have shown that the dcpo presentation $\Delta(\mathbb{A})$ of
Definition \ref{Delta:Definition} allows us to describe the canonical
extension of a lattice $\mathbb{A}$, together with the
$\sigma$-extension of any additional operator $f\colon A^n \to A$. 
The
following theorem, which can be found in \cite{GJ1994, GH2001}, can
now be seen as an application of general results concerning dcpo
algebras from \cite{JMV2008} to the specific case of canonical
extensions of lattices with operators.
\begin{theorem} [cf.~\cite{GJ1994}, Theorem 4.5 and \cite{GH2001}, Theorem 6.3]
\label{GJ94}
  Let $\mathbb{A}$\linebreak $=\langle A; \wedge_\mathbb{A}, \vee_\mathbb{A},
  0_\mathbb{A}, 1_\mathbb{A}, (\omega_\mathbb{A})_{\omega \in \Omega'}
  \rangle$ be a bounded lattice with additional operations and let 
  $\Omega\subseteq\{\wedge,\vee,0,1\} \cup \Omega'$ consist entirely 
  of operation symbols that interpret as operators in $\mathbb{A}$. If
  $s(x_1,\dotsc,x_n)$ and 
  $t(x_1,\dotsc , x_n)$ are $n$-ary $\Omega$-terms such that
  $\mathbb{A} \models s \preccurlyeq t$, then also $\mathbb{A}^\sigma \models
  s \preccurlyeq t$.
\end{theorem}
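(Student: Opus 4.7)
The plan is to derive the canonicity statement as a two-step lifting of the inequation $s\preccurlyeq t$: first from $\mathbb{A}$ to its free co-directed meet completion $\mathcal{F}(\mathbb{A})$, and then from $\mathcal{F}(\mathbb{A})$ to the free dcpo completion $\overline{\Delta(\mathbb{A})}$. By Theorem \ref{Delta:Can-Ext} the latter is (up to isomorphism) $\mathbb{A}^\sigma$, and by the preceding Corollary we have $\overline{\omega^\mathcal{F}} = \omega^\sigma$ for each $\omega\in\Omega$. So if we can transport the inequation across both steps using the operations $(\omega_\mathbb{A})^\mathcal{F}$ on $\mathcal{F}(\mathbb{A})$ and then $\overline{(\omega_\mathbb{A})^\mathcal{F}}$ on $\overline{\Delta(\mathbb{A})}$, what we obtain is exactly $\mathbb{A}^\sigma \models s\preccurlyeq t$ with its intended $\sigma$-operations.

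For the first step, every $\omega\in\Omega$ is by hypothesis an operator on $\mathbb{A}$, hence in particular order-preserving in each coordinate. Proposition \ref{Filter:Canonicity} therefore applies to the ordered $\Omega$-algebra $\mathbb{A}$ and yields $\mathcal{F}(\mathbb{A}) \models s\preccurlyeq t$, where $\mathcal{F}(\mathbb{A})$ carries the algebra structure $\omega_{\mathcal{F}(\mathbb{A})} := (\omega_\mathbb{A})^\mathcal{F}$. For the second step, I would invoke Proposition \ref{dcpo:Canonicity}, which demands that $\Delta(\mathbb{A})$ be a dcpo presentation (it is, by Definition \ref{Delta:Definition}) and that each $(\omega_\mathbb{A})^\mathcal{F}$ be cover-stable with respect to $C_\mathbb{A}$. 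Cover-stability is exactly the content of Lemma \ref{Delta:Operator-Cover-Stable}, whose hypothesis in turn is that $(\omega_\mathbb{A})^\mathcal{F}$ is still an operator, and this is guaranteed by Proposition \ref{Filter:Operator}. Thus Proposition \ref{dcpo:Canonicity} promotes the inequation from $\mathcal{F}(\mathbb{A})$ to $\overline{\Delta(\mathbb{A})}$ with operations $\overline{(\omega_\mathbb{A})^\mathcal{F}}$, which by the Corollary coincide with $(\omega_\mathbb{A})^\sigma$.

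The main obstacle is not in the verification of any one of these lemmas, but in being careful about what algebra structure lives on each completion at each stage, and in checking that the composite lifting genuinely produces the $\sigma$-extensions on $\mathbb{A}^\sigma$ rather than some other Scott-continuous extension of the original operators. This bookkeeping is precisely what the preceding Corollary has been arranged to settle, via the identification $\overline{f^\mathcal{F}} = f^\sigma$ for every operator $f$. A subsidiary point worth emphasising in the writeup is that the lattice symbols $\wedge,\vee,0,1$ are treated uniformly with the symbols of $\Omega'$: $\vee,0,1$ are always operators, while $\wedge$ is an operator only under extra hypotheses such as distributivity, and the theorem implicitly restricts to whichever of these symbols actually interpret as operators in $\mathbb{A}$, so no case analysis is required beyond invoking the hypothesis of the theorem.
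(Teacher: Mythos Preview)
Your proposal is correct and follows exactly the paper's two-step strategy: lift $s\preccurlyeq t$ from $\mathbb{A}$ to $\mathcal{F}(\mathbb{A})$ via Proposition~\ref{Filter:Canonicity}, then from $\mathcal{F}(\mathbb{A})$ to $\overline{\Delta(\mathbb{A})}\cong\mathbb{A}^\sigma$ via Proposition~\ref{dcpo:Canonicity}, with cover-stability supplied by Lemma~\ref{Delta:Operator-Cover-Stable} and the identification of the resulting operations with the $\sigma$-extensions by the preceding Corollary. One small slip: the hypothesis of Lemma~\ref{Delta:Operator-Cover-Stable} is that the \emph{original} map $\omega_\mathbb{A}$ is an operator (which is given directly by assumption), not that $(\omega_\mathbb{A})^\mathcal{F}$ is; Proposition~\ref{Filter:Operator} is used inside that lemma's proof rather than as its hypothesis, though this does not affect the correctness of your argument.
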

\begin{proof}
  Let $\mathbb{A}$, $s$ and $t$ be as in the assumptions of the
  theorem.  Since operators are monotone, it follows by
  Proposition \ref{Filter:Canonicity} that $\mathcal{F}(\mathbb{A})
  \models s \preccurlyeq t$. It follows by Proposition
  \ref{Filter:Operator} and Lemma \ref{Delta:Operator-Cover-Stable}
  that $\overline{\Delta(\mathbb{A})} \models s\preccurlyeq t$.
\end{proof}

\begin{remark}
  Observe that $\vee_\mathbb{A} \colon \mathbb{A} \times \mathbb{A}\to
  \mathbb{A}$ is always an operator by associativity but that
  $\wedge_\mathbb{A}\colon \mathbb{A} \times \mathbb{A} \to\mathbb{A}$
  is an operator if and only if $ \mathbb{A}$ is distributive.
\end{remark}

\begin{remark}\label{rem:2sided}
Canonical extension is a two sided construction: it does not favour
joins over meets. 
This is perhaps best illustrated by \cite{GP2008}. There it is shown
that if we consider alternating applications of directed join and meet
completion to a lattice $\mathbb{A}$, then the embeddings $\downarr_{\mathcal{F}(\mathbb{A})}
\colon \mathcal{F}(\mathbb{A}) \to
\mathcal{I}(\mathcal{F}(\mathbb{A}))$ and $\uparr_{\mathcal{I}(\mathbb{A})}
\colon \mathcal{I}(\mathbb{A}) \to
\mathcal{F}(\mathcal{I}(\mathbb{A}))$ factor through
$\mathbb{A}^\delta$ in a unique way; see Figure \ref{fig:int}.
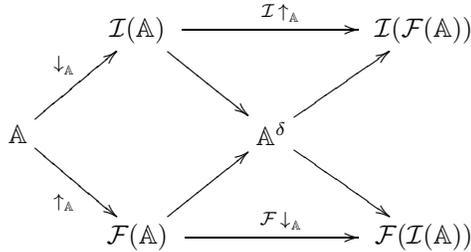
\begin{figure}[htp]    
$$ 
 \hskip-2cm 
 \xymatrix{   
&
{\mathcal{I}(\mathbb{A})\ }\ar@{->}[dr]\ar@{->}[rr]^{\mathcal{I}
\uparr_\mathbb{A}}    
&   
&{\  \mathcal{I}(\mathcal{F}(\mathbb{A}))}   
\\ 
\mathbb{A}\ar@{->}[ur]^{\downarr_\mathbb{A}} \ar@{->}[dr]_{\uparr_\mathbb{A}} &  
&   
{\mathbb{A}^\delta}\ar@{->}[ur]\ar@{->}[dr]   
&   
\\   
&
{\mathcal{F}(\mathbb{A})\ }\ar@{->}[ur]\ar@{->}[rr]^{\mathcal{F}
\downarr_\mathbb{A}}   
&   
&{\ \mathcal{F}(\mathcal{I}(\mathbb{A}))}   
 }   
$$   
 \caption{The canonical extension as an interpolant, as discussed in 
\cite{GP2008}}
\label{fig:int}     
\end{figure} 
In order to apply the existing theory on dcpo completions we have presented our
results in terms of a dcpo completion of the free co-directed meet completion of
the original lattice, using the fact that $\mathbb{A}^\delta$
interpolates between $\mathcal{F}(\mathbb{A})$ and
$\mathcal{I}(\mathcal{F}(\mathbb{A}))$. 
Of course the order dual approach would have worked just as well:
Starting from the directed join completion (concretely, the ideal completion) of 
$\mathbb{A}$, we could have given a \emph{co-dcpo}-presentation of 
$\mathbb{A}^\delta$. The extension 
of a dual operator $f\colon\mathbb{A}_1 \times \dotsb \times \mathbb{A}_n \to \mathbb{B}$,
i.e.~a map preserving binary meets in each coordinate, via this co-dcpo presentation would 
then yield an extension 
$f^\pi \colon\mathbb{A}_1^\delta \times \dotsb \times \mathbb{A}_n^\delta\to\mathbb{B}$ of
$f$ and the dual of Theorem~\ref{GJ94} would guarantee that equations among dual 
operators lift to the extension. This remark restores some symmetry to the situation, though 
we note that the extension $f^\sigma$ obtained from the free co-dcpo followed by the dcpo 
completion described in this paper and the extension of an operation obtained via the order 
dual approach do not in general agree. This latter extension is also well known and much 
used in the theory of canonical extensions and is known as the $\pi$-extension of $f$.
The extension of the underlying lattice using either approach is however one and the same -- 
this is easy to see by the fact that the characterising properties of canonical extensions 
are self-dual properties.
\end{remark}

\section{Discussion} \label{section:discussion}   
 
 The original 1951 canonicity result of J\'onsson and Tarski had a fairly complicated
 proof. In addition, it required the underlying lattice to be, not only distributive, but
 Boolean even though the canonicity of equations only is implied if the negation
 is not involved. The latter fact obviously begged the question of whether the result 
 was actually a (distributive) lattice result. 
 
 It took over 40 years before this question was answered in the positive in the 
 paper \cite{GJ1994} (and fairly soon afterwards, it was shown \cite{GH2001} 
 that it was in fact just a lattice result). The main breakthrough was in the 1994 
 paper and it consisted in realising the central role played by Scott continuity.
 Even though the paper \cite{GJ1994} was written in a language quite different 
 from that of \cite{JMV2008}, the general lines of the proof in \cite{GJ1994} do 
 in fact follow those of \cite{JMV2008}, albeit in the special case of the 
 presentation $\Delta(\mathbb{A})$. With this article we have shown explicitely
 how the two relate.
 
 While the canonicity result for operators is a special case of the much more 
 general domain theoretic result of \cite{JMV2008}, the real power and 
 interest of canonical extensions involves, at least the presence, and sometimes
 also the direct involvement of {\it order reversing} operations such as negations,
 implications, and other non-monotonic logical connectives. Because of the 
 up-down symmetry of canonical extension, order-reversing operations are
 easily and meaningfully extended to canonical extensions (we have just identified 
 it as the free dcpo generated by a dcpo presentation based on a free co-dcpo 
 completion, but as mentioned in Remark~\ref{rem:2sided}  above, we could as well 
 have obtained it as the free co-dcpo generated by a co-dcpo presentation based on 
 a free dcpo completion of the original algbera). In \cite{GJ2004} topological methods 
 for canonical extensions were introduced and these allow arbitrary maps to be extended
 to the canonical extension in a very natural way. This in turn allows for a very fine 
 analysis of canonicity in that general setting \cite{GJ2004}.  We are not aware of 
 any parallel to these methods in domain theory but expect that the current paper will 
 foster new unifying developments.
 
 As a case in point, one of the referees of this paper pointed out that our 
 Definition~\ref{Delta:Definition}, and the results following it, may be generalised 
 to a more general dcpo presentation setting. These generalisations are indeed 
 possible and this is closely related to parallel work of Sam van Gool on canonical 
 extensions of strong proximity lattices which are a kind of dcpo presentations of stably 
 compact spaces.
 
\bibliographystyle{plain}

\begin{thebibliography}{10}

\bibitem{AJ1994}
S.~Abramsky and A.~Jung.
\newblock Domain theory.
\newblock In {\em Handbook of logic in computer science}, volume~3, pages
  1--168. Clarendon Press, Oxford, 1994.

\bibitem{Abr1991}
Samson Abramsky.
\newblock Domain theory in logical form.
\newblock {\em Ann. Pure Appl. Logic}, 51(1-2):1--77, 1991.
\newblock Second Annual IEEE Symposium on Logic in Computer Science (Ithaca,
  NY, 1987).

\bibitem{Ba56}
B.~Banaschewski.
\newblock Hullensysteme und erweiterungen von quasi-ordnungen.
\newblock {\em Z. Math. Logik Grundlagen Math.}, 2:369--377, 1956.

\bibitem{BdRV2001}
Patrick Blackburn, Maarten de~Rijke, and Yde Venema.
\newblock {\em Modal logic}, volume~53 of {\em Cambridge Tracts in Theoretical
  Computer Science}.
\newblock Cambridge University Press, Cambridge, 2001.

\bibitem{DGP05}
Michael Dunn and Mai Gehrke~Alessandra Palmigiano.
\newblock Canonical extensions and relational completeness of some
  substructural logics.
\newblock {\em J. Symb. Logic}, 70:2005, 2005.

\bibitem{GH2001}
Mai Gehrke and John Harding.
\newblock Bounded lattice expansions.
\newblock {\em J. Algebra}, 238(1):345--371, 2001.

\bibitem{GJ1994}
Mai Gehrke and Bjarni J{\'o}nsson.
\newblock Bounded distributive lattices with operators.
\newblock {\em Math. Japon.}, 40(2):207--215, 1994.

\bibitem{GJ2004}
Mai Gehrke and Bjarni J{\'o}nsson.
\newblock Bounded distributive lattice expansions.
\newblock {\em Math. Scand.}, 94(1):13--45, 2004.

\bibitem{GNV2005}
Mai Gehrke, Hideo Nagahashi, and Yde Venema.
\newblock A sahlqvist theorem for distributive modal logic.
\newblock {\em Ann. Pure Appl. Logic}, 131(1-3):65--102, 2005.

\bibitem{GP2008}
Mai Gehrke and Hilary~A. Priestley.
\newblock Canonical extensions and completions of posets and lattices.
\newblock {\em Rep. Math. Logic}, 43:133--152, 2008.

\bibitem{GHK:03}
Gerhard Gierz, Karl~Heinrich Hofmann, Klaus Keimel, Jimmie~D. Lawson, Michael
  Mislove, and Dana~S. Scott.
\newblock {\em Continuous {L}attices and {D}omains}, volume~93 of {\em
  Encyclopedia of Mathematics and its Applications}.
\newblock Cambridge University Press, 2003.

\bibitem{Goldblatt1989}
Robert Goldblatt.
\newblock Varieties of complex algebras.
\newblock {\em Ann. Pure Appl. Logic}, 44(3):173--242, 1989.

\bibitem{JT1951}
Bjarni J{\'o}nsson and Alfred Tarski.
\newblock Boolean algebras with operators. {I}.
\newblock {\em Amer. J. Math.}, 73:891--939, 1951.

\bibitem{JMV2008}
Achim Jung, M.~Andrew Moshier, and Steve Vickers.
\newblock Presenting dcpos and dcpo algebras.
\newblock In {\em Proceedings of the 24th Conference on the Mathematical
  Foundations of Programming Semantics (MFPS XXIV)}, volume 218 of {\em
  Electronic Notes in Theoretical Computer Science}, pages 209--229, 2008.

\bibitem{Plotkin81}
G.S. Plotkin.
\newblock Post-graduate lecture notes in advanced domain theory, incorporating
  the ``pisa notes''.
\newblock (Department of Computer Science, University of Edinburgh,; available
  on-line), 1981.

\bibitem{Schmidt2}
J.~Schmidt.
\newblock Universal and internal properties of some extensions of partially
  ordered sets.
\newblock {\em J. Reine u. Angewandte Math.}, 253:28--42, 1972.

\bibitem{Scott1970}
Dana~S. Scott.
\newblock Outline of a mathematical theory of computation.
\newblock In {\em 4th annual Princeton conference on information sciences and
  systems}, pages 169 --176, 1970.

\end{thebibliography}

\end{document}